\documentclass[final,3p,times,twocolumn]{elsarticle}

\usepackage{amssymb}

\usepackage[T1]{fontenc}
\usepackage{amsfonts}
\usepackage{amsmath}
\usepackage{amsthm}
\usepackage{bbm}
\usepackage{xcolor}
\usepackage{hyperref}
\hypersetup{colorlinks, linkcolor=red, citecolor=blue, urlcolor=blue}
\newcommand{\EE}{\mathbb{E}}
\newcommand{\PP}{\mathbb{P}}
\newcommand{\RR}{\mathbb{R}}

\newcommand{\cA}{\mathcal{A}}
\newcommand{\cE}{\mathcal{E}}
\newcommand{\cI}{\mathcal{I}}

\newcommand{\cN}{\mathcal{N}}

\newcommand{\cS}{\mathcal{S}}
\newcommand{\cZ}{\mathcal{Z}}

\DeclareMathOperator*{\argmax}{arg\,max}

\DeclareMathOperator{\erf}{erf}

\let\eps\varepsilon
\let\to\longrightarrow
\newcommand{\indic}{\mathbbm{1}}

\newcommand{\conv}[2][\ ]{\overset{#1}{\underset{#2}{\to}}}

\newcommand{\aseq}[1]{\underset{#1}{=}}
\newcommand{\equi}[1]{\underset{#1}{\sim}}
\newcommand{\mbf}[1]{\mathbf{#1}}

\theoremstyle{definition}
\newtheorem{assu}{Assumption}

\theoremstyle{plain}
\newtheorem{prop}{Proposition}
\newtheorem{lem}{Lemma}

\journal{Probabilistic Engineering Mechanics}

\begin{document}
\begin{frontmatter}

\title{Reference prior for Bayesian estimation of seismic fragility curves}
\author[label1,label2]{Antoine Van Biesbroeck\corref{cor1}}
\ead{antoine.van-biesbroeck@polytechnique.edu}
\author[label3]{Clément Gauchy}
\author[label2]{Cyril Feau}
\author[label1]{Josselin Garnier}

\affiliation[label1]{organization={CMAP, CNRS, École polytechnique, Institut Polytechnique de Paris},%
            city={91120 Palaiseau},
            country={France}}

\affiliation[label2]{organization={Université Paris-Saclay, CEA, Service d'Études M\'ecaniques et Thermiques},%
            city={91191 Gif-sur-Yvette},
            country={France}}

\affiliation[label3]{organization={Université Paris-Saclay, CEA, Service de Génie Logiciel pour la Simulation},%
            city={91191 Gif-sur-Yvette},
            country={France}}

\cortext[cor1]{Corresponding author}

\begin{abstract}
      {{One of the key elements of probabilistic seismic risk assessment studies is the fragility curve, which represents the conditional probability of failure of a mechanical structure for a given scalar measure derived from seismic ground motion. For many structures of interest, estimating these curves is a daunting task because of the limited amount of data available; data which is only binary in our framework, i.e., only describing the structure as being in a failure or non-failure state. A large number of methods described in the literature tackle this challenging framework through parametric log-normal models. Bayesian approaches, on the other hand, allow model parameters to be learned more efficiently. However, the impact of the choice of the prior distribution on the posterior distribution cannot be readily neglected and, consequently, neither can its impact on any resulting estimation. This paper proposes a comprehensive study of this parametric Bayesian estimation problem for limited and binary data. Using the reference prior theory as a cornerstone, this study develops an objective approach to choosing the prior. This approach leads to the Jeffreys prior, which is derived for this problem for the first time. The posterior distribution is proven to be proper (i.e., it integrates to unity) with the Jeffreys prior but improper with some traditional priors found in the literature. With the Jeffreys prior, the posterior distribution is also shown to vanish at the boundaries of the parameters' domain, which means that sampling the posterior distribution of the parameters does not result in anomalously small or large values. Therefore, the use of the Jeffreys prior does not result in degenerate fragility curves such as unit-step functions, and leads to more robust credibility intervals. The numerical results obtained from two different case studies---including an industrial example---illustrate the theoretical predictions.}
      }
\end{abstract}

\begin{keyword}
Bayesian analysis \sep Fragility curves \sep Reference prior \sep the Jeffreys prior \sep Seismic Probabilistic Risk Assessment

\end{keyword}

\end{frontmatter}

\section{Introduction}

Fragility curves are key assets for the probabilistic seismic risk assessment (SPRA) of mechanical structures. Introduced in the 1980s for seismic risk assessment studies carried out on nuclear facilities (for example, \cite{Kennedy1980,KENNEDY198447, PARK1998, KENNEDY1999, Cornell2004}), they express the probability of failure of mechanical structures as a function of a scalar value derived from seismic ground motions, called the intensity measure (IM). One example of such a scalar value is the peak ground acceleration (PGA). In \cite{Cornell2004}, \citeauthor{Cornell2004} states the condition under which reducing seismic risk to IM values is relevant. This condition is the so-called ``sufficiency assumption'' of the IM with regards to the magnitude M, the source-site distance R, and other parameters which are thought to drive seismic risk at the location considered \cite{Grigoriu2021}.
Fragility curves can be estimated using data collated from various sources, namely: (i) expert assessments supported by test data \cite{Kennedy1980, KENNEDY198447, PARK1998, Zentner2017}, (ii) experimental data \cite{PARK1998, Gardoni2002, Choe2007}, (iii) damage reports---called empirical data---obtained from existing structures that have been subjected to an earthquake \cite{Shinozuka2000, Lallemant2015, Straub2008}, and (iv) analytical results obtained from various numerical models using artificial or natural seismic excitations \cite{ZENTNER20101614, WangF2020,MANDAL201611,WANGZ2018,WANG2018232,ZHAO2020103}.

Parametric fragility curves were historically introduced in the SPRA framework because an estimation of such curves is possible with small sample sizes {and binary data, i.e., when the only indication is the state of the structure: failure or non-failure. Since then, the log-normal model has become the most widely used \cite{Shinozuka2000, Lallemant2015, Straub2008, ZENTNER20101614, MANDAL201611, WANGZ2018, WANG2018232, ZHAO2020103, ELLINGWOOD2001251, KIM2004, Mai2017, TREVLOPOULOS2019}, and it remains prevalent to this day due to its proven capability to handle limited and binary data (e.g., \cite{WangF2020, Katayama2021, KHANSEFID2023, LEE2023}).}
In practice, several strategies can be implemented to adjust the two parameters for this model, namely the median $\alpha$ and the log standard deviation $\beta$.
When the data is binary, \citeauthor{Lallemant2015} \cite{Lallemant2015} recommends a strategy called the maximum likelihood estimation (MLE). This technique is one of the most widely used. When the data samples are independent from each other, the bootstrap technique can be used as an additional measure in order to obtain confidence intervals related to the size of the samples considered \cite{Shinozuka2000, ZENTNER20101614, WangF2020}. 

If the data set contains more information than a simple indication of failure, i.e., if data isn’t binary, techniques based on machine learning can also be exploited. This is the case, for instance, when a mechanical structural failure is the result of an engineering demand parameter (EDP) exceeding a threshold limit, said EDP being observed as part of an experiment (numerical or practical). Examples of such techniques are: linear regression or generalized linear regression \cite{Lallemant2015}, classification-based techniques \cite{BERNIER2019, KIANI2019, Sainct2020}, kriging \cite{Gidaris15, GENTILE2020101980, Gauchy2022}, polynomial chaos expansion \cite{Mai16}, stochastic polynomial chaos expansions \cite{ZHU2023}, and artificial neural networks (ANN) \cite{WANG2018232, MITROPOULOU2011, WANGZ2018}. Whenever data is obtained through numerical simulations, some of these methods can be coupled with adaptive techniques to reduce the number of calculations required \cite{WANG2018232, Sainct2020, Gidaris15, Gauchy2021}. Some of these techniques are compared and their strengths and weaknesses highlighted in \cite{Lallemant2015}.

The Bayesian framework has recently become increasingly popular in seismic fragility analysis \cite{Gardoni2002, WANG2018232, Katayama2021, LEE2023, KOUTSOURELAKIS2010, damblin2014, TADINADA201749, KWAG20181, Jeon2019, TABANDEH2020}. It actually allows solving the irregularity issues encountered when estimating parametric fragility curves by using, for example, the MLE method, which can indeed lead to unrealistic or degenerate fragility curves such as unit-step functions when only limited data is available.
Such problems can particularly be encountered when resorting to high-fidelity models (i.e., complex and detailed) because of the calculation load or for example when experimental tests are carried out on shaking tables. In earthquake engineering, Bayesian inference is often used to update existing log-normal fragility curves previously obtained through various approaches, assuming independent distributions for the prior values of $\alpha$ and $\beta$, such as log-normal distributions. 
For example, in \cite{TADINADA201749} and \cite{KWAG20181}, the median prior values come from equivalent linearized mechanical models. In \cite{WANG2018232}, both aleatory and epistemic uncertainties are taken into account in the parametric model originally introduced in \cite{Kennedy1980}: an ANN is trained and used to characterize (i) the aleatory uncertainty and (ii) the prior median value of $\alpha$, while the associated epistemic uncertainty is taken from the existing literature. The log-normal prior distribution of $\alpha$ is then updated with empirical data. 
In \cite{Katayama2021}, the results of incremental dynamic analysis are used to obtain a prior value of $\alpha$, whereas the prior value of $\beta$ is determined through a parametric study. This results in satisfactory convergence, whatever its target value, before application to practical problems.
In \cite{Straub2008}, \citeauthor{Straub2008} mainly focus on the implications for fragility analyses of statistical dependencies within the data. The prior is defined as the product of a normal distribution for $\ln(\alpha)$, and the improper distribution $1/\beta$ for $\beta$. The definition of the normal distribution is based on engineering assessments, assuming that, for the relevant component for example, the median lies between 0.02~g and 3~g with a probability of 90\%. This prior was preferred to $1/\alpha$ on the grounds that it led to unrealistically large posterior values of $\alpha$. A sensitivity analysis is further performed to examine the impact of the choice of the prior distribution on the final results. The Bayesian framework is also relevant for fitting numerical models (e.g., mathematical expressions based on engineering assessments or physics-based models) to experimental data in order to estimate fragility curves \cite{Gardoni2002, TABANDEH2020} or metamodels such as logistic regressions \cite{KOUTSOURELAKIS2010, Jeon2019}.

{In this paper, we will deal with limited sets of binary data and will consider the log-normal model in a Bayesian framework. In this sense, this paper will mainly address equipment problems for which only binary results of seismic qualification tests (e.g., tests of electrical relays, etc.) or empirical data such as presented in \cite{Straub2008} are available. However, the methodology developed here could perfectly be applied to simulation-based approaches as well.} The Bayesian perspective focuses on the impact of the prior on the estimations of parametric fragility curves, as part of the SPRA framework. With a limited data set, the impact of the choice of the prior on the posterior distribution cannot be neglected and, consequently, neither does its impact on the estimation of any key asset related to the fragility curves. In this study, the goal is to choose the prior while eliminating, insofar as it is possible, any subjectivity which would unavoidably lead to open questions regarding the impact of the prior on the final results. The reference prior theory defines relevant metrics for determining whether a prior can be called ``objective'' \cite{Kass1996}. This allows us to focus on the well-known Jeffreys prior, the asymptotic optimum of the ``mutual information" w.r.t. the size of the data set \cite{Berger2009}, and which will be explicitly derived for the first time in this study. Of course, from a subjectivity perspective, the choice of a parametric model for the fragility curve is debatable. However, numerical experiments based on the seismic responses of mechanical systems suggest that the choice of an appropriate IM makes it possible to reduce the potential biases between reference fragility curves (that can be obtained by massive Monte-Carlo methods) and their log-normal estimations \cite{Gauchy2021}. This observation is reinforced by recent studies on the impact of IMs on fragility curves \cite{Sainct2020, CIANO2020, CIANO2022}. {In this paper, we will ensure the relevance of the estimations by comparing them to the results of massive Monte-Carlo methods on academic examples. Although the numerical results are illustrated with the PGA, the proposed methodology is independent of the choice of the IM, and it can be implemented with any IM of interest, without additional complexity.}

After formulating the problem from a Bayesian perspective in the next section, we will review the objective prior theory in Section~\ref{sec:objprior}. Our main contributions begin in Section~\ref{sec:construction}, where the reference prior is explicitly derived. In Section~\ref{sec:tools}, we will present the estimation tools and the performance evaluation metrics used for this paper. These are implemented in Section~\ref{sec:application} on {two} different case studies. For each case, the \emph{a posteriori} distributions of the parameters of the log-linear probit model and of the corresponding fragility curves are compared with those obtained with traditional \emph{a priori} choices found in the literature. Section~\ref{sec:conclusion} will conclude this paper and summarize our findings. \ref{app:asymptotics} and \ref{app:SKreview} deal with mathematical results regarding the asymptotic properties of the priors and posteriors considered in this work. In particular, \ref{app:lik-degenarcy-discuss} explains the apparition of degenerate and unrealistic fragility curves with the MLE or Bayesian estimations of traditional priors.

\section{{Bayesian model for parametric log-normal seismic fragility curves}}\label{sec:pb}

As mentioned in the introduction, a log-linear probit model is often used to estimate fragility curves. In this model, the probability of failure given the IM takes the following form: 
        \begin{equation}\label{eq:Pfa}
            P_f(a)=\PP(\mbox{`failure'}|\text{IM}=a) = \Phi\left(\frac{\log a-\log\alpha}{\beta}\right) ,
        \end{equation}
    where $\alpha, \beta\in (0,+\infty)$ are the two model parameters and $\Phi$ is the cumulative distribution function of a standard Gaussian variable. In the following, we denote $\theta=(\alpha,\beta)$. 

    From a Bayesian perspective, $\theta$ is considered as a random variable \cite{Robert2007}. Its distribution is denoted by $\pi$ and called the prior. We denote by $\Theta\subset [0,+\infty)^2$ the support of the prior distribution.
    
    Our statistical model consists of the observations of independent realizations $(a_1,z_1),\dots,(a_k,z_k)\in\cA\times \{0,1\}$, {where $\cA \subset [0,+\infty)$ is the support of the distribution of the IM and $k$ is the size of the data set}. For the $i$th seismic event, $a_i$ is the observed IM, and $z_i$ is the observation of a failure ($z_i$ is equal to one if failure has been observed during the $i$th seismic event, and is equal to zero otherwise). %
    The joint conditional distribution of the pair $(a,z)$ on $\theta$ has the form:
        \begin{equation}
        \label{eq:pazgiventheta}
            p(a,z|\theta) %
            = p(a)p(z|a,\theta) ,
        \end{equation}
    where $p(a)$ denotes the distribution of the IM and $p(z|a,\theta)$ is a Bernoulli distribution whose parameter (the probability of failure) depends on $a$ and $\theta$ as expressed by Eq.~(\ref{eq:Pfa}). The product of the conditional distributions $p(z_i|a_i,\theta)$ is the likelihood of our model expressed as:
        \begin{equation}\label{eq:likelihood}
            \ell_k(\mbf z|\mbf a,\theta) = \prod_{i=1}^k\Phi\left(\frac{\log\frac{a_i}{\alpha}}{\beta}\right)^{z_i}\left(1-\Phi\left(\frac{\log\frac{a_i}{\alpha}}{\beta}\right)\right)^{1-z_i}  ,
        \end{equation}
    denoting $\mbf a=(a_i)_{i=1}^k$, $\mbf z=(z_i)_{i=1}^k$.
    
    The \emph{a posteriori} distribution of $\theta$ can be computed by the Bayes theorem. The resulting distribution %
        \begin{equation}\label{eq:posterior}
            p(\theta|\mbf a,\mbf z)=\frac{\ell_k(\mbf z|\mbf a,\theta)\pi(\theta)}{\int_\Theta \ell_k(\mbf z|\mbf a,\theta')\pi(\theta') d\theta'}
        \end{equation}
        is called the posterior.
    Sampling $\theta$ with the posterior distribution allows for the estimation of any relevant quantity. This method is explained further in Section \ref{sec:BayesFram}. Note that the Bayesian method requires choosing the prior $\pi$. In the next section, we will discuss how to make such a choice without any subjectivity.

\section{Reference prior theory}\label{sec:objprior}

This section is devoted to the choice of the prior in a Bayesian context. To this end, we have to deal with the idea of mutual information. Its definition and its usefulness in Bayesian problems are not new \cite{Bernardo1979, Kass1996}. However, its usefulness in estimating seismic fragility curves has not yet been studied in the literature. Shannon's information theory provides relevant elements about this. Information entropy is a common example that helps distinguish between an informative and a non-informative distribution \cite{Jaynes1982}.

One way to define a non-subjective prior is to look for a non-informative one (i.e., one with high entropy). However, this type of prior leads to posterior distributions that are quite unaffected by the likelihood of the statistical model. This can lead to the relevant parameters taking unrealistic posterior values when a limited amount of data is available. The consequence of this is a weaker convergence of the resulting estimations. Moreover, in practice and in earthquake engineering in particular, it is difficult to fully define a prior using a ``rigorous" approach. For example, in the case of a given distribution (which is already a subjective choice and not always easy to justify), the median can be obtained beforehand through a less-refined mechanical model. There remains, however, the matter of choosing the associated variance, of which we have just said that the consequences for the convergence of the \emph{a posteriori} estimations cannot be neglected. For all these reasons, it is relevant to search for an \emph{a priori} with objective information. 

To choose such a prior, let us consider the criterion introduced by \citeauthor{Bernardo1979} \cite{Bernardo1979} to define the so-called reference priors. The idea is to select the prior $\pi$ that maximizes the mutual information indicator $I(\pi|k)$, which expresses the information provided by the data to the posterior, relatively to the prior. In other words, the purpose of this criterion is to identify the prior that maximizes the capacity to ``learn" from observations. The mutual information indicator is defined by:
    
    {
    \begin{equation}
        I(\pi|k)= \sum_{ {\mbf z} \in \{0,1\}^k}  \int_{\cA ^k} KL(p(\cdot|\mbf a,\mbf z)||\pi)p(\mbf a,\mbf z)\prod_{l=1}^k  da_l ,
        \label{eq:defI}
    \end{equation}
    where the posterior $p(\cdot|\mbf a,\mbf z)$ is given by (\ref{eq:posterior}), and the joint distribution $p(\mbf a,\mbf z)$ is from (\ref{eq:pazgiventheta}-\ref{eq:likelihood}):
    \begin{align}
    p(\mbf a,\mbf z) &= \int_{\Theta} \prod_{l=1}^k p(a_l,z_l|\theta) \pi(\theta) d\theta\\
    &= \int_{\Theta} \prod_{l=1}^k p(a_l)  \ell_k(\mbf z|\mbf a,\theta) \pi(\theta) d\theta .\nonumber
    \end{align}
As discussed in Section~\ref{sec:practseism}, $p(a)$ can be assumed to be a log-normal probability distribution function, derived from seismic signals not included in the observations. The indicator in Eq.~(\ref{eq:defI}) is therefore independent from the data set considered.} This indicator is based on the Kullback-Leibler divergence between the posterior and the prior, which is known to express the idea that the information is provided by one distribution to another:
    \begin{equation}
        KL(p||q)=\int_{\Theta} p(\theta)\log\frac{p(\theta)}{q(\theta)} d \theta.
        \label{eq:defKL}
    \end{equation}

According to the literature, a reference prior can be suitably defined as the result of an asymptotic optimization of this mutual information metric \cite{Berger2009, Clarke1994}. It has been proven that, under some mild assumptions which are satisfied in our framework, the Jeffreys prior defined by
       \begin{equation}
        J(\theta)\propto\sqrt{|\det\cI^k(\theta)|} ,
        \label{eq:jeff}
    \end{equation}
    is the reference prior, with $\cI^k$ being the Fisher information matrix:
     {
   \begin{align}
        &\cI(\theta)^k_{i,j}  \\
           & = -\sum_{\mbf z\in \{0,1\}^k} \int_{\cA^k} \ell_k(\mbf z|\mbf a,\theta)\frac{\partial^2 \log \ell_k(\mbf z|\mbf a,\theta)}{\partial\theta_i\partial \theta_j}\prod_{l=1}^k p(a_l)  da_l .\nonumber
    \end{align}   
    }
    The property $\cI(\theta)^k=k\cI(\theta)$ makes $J$ independent of $k$, since it is defined up to a multiplicative constant. The Jeffreys prior is already well known in Bayesian theory for being invariant by a re-parametrization of the statistical model \cite{Bernardo2005}. This property is essential as it makes the choice of the model parameters $\theta$ without any incidence on the resulting posterior.
    
\section{Constructing the Jeffreys prior} \label{sec:construction}

    Based on the elements discussed in the previous section, the Jeffreys prior seems to be the best objective candidate for this problem. In this section, we will therefore calculate the Jeffreys prior in order to estimate log-normal seismic fragility curves with binary data. To our knowledge, the application of the reference prior theory to this field of study is completely new. The explicit calculation of this prior is carried out in Section~\ref{sec:jeffcalc}. It is followed in Section~\ref{sec:practseism} by an explanation about the practical implementation suggested and discussed in Section~\ref{sec:JeffDiscussion}. That last section in particular tackles the question of the proper characteristic of its resulting posterior, which is essential for the validation of any MCMC-based posterior sampling algorithm.
    
    \subsection{Calculating the Jeffreys prior}\label{sec:jeffcalc}
    
         The first step consists in computing the Fisher information matrix $\cI(\theta)=\cI(\theta)^1$ in our model, defined in Eq.~\eqref{eq:likelihood}. 
        Here, $\theta=(\alpha,\beta)\in \Theta$ and 
        \begin{equation}
            \cI(\theta)_{i,j}= -\sum_{z \in \{0,1\}} \int_{\cA } p(z|a,\theta)\frac{\partial^2 \log p(z|a,\theta)}{\partial\theta_i \partial\theta_j} p(a)da
        \end{equation}
        for $i,j\in\{1,2\}$, with $p(z|a,\theta)=\ell_1(z|a,\theta)$ being the likelihood expressed in Eq.~(\ref{eq:likelihood}), i.e.,
            \begin{multline}
                \log p(z|a,\theta) = z\log\Phi\left(\frac{\log a-\log\alpha}{\beta}\right)\\ + (1-z)\log\left(1-\Phi\left(\frac{\log a-\log\alpha}{\beta}\right)\right).
            \end{multline}
            
Denoting $\gamma=\gamma(a)=\beta^{-1}\log({a}/{\alpha})$, the first-order partial derivatives of $\log p(z|a,\theta)$ with respect to $\theta$ are:
        \begin{align}
            \frac{\partial}{\partial\alpha}\log p(z|a,\theta) =& -\frac{1}{\alpha\beta}z\frac{\Phi'(\gamma)}{\Phi(\gamma)} + \frac{1}{\alpha\beta}(1-z)\frac{\Phi'(\gamma)}{1-\Phi(\gamma)} , \\
            \nonumber
            \frac{\partial}{\partial\beta}\log p(z|a,\theta) =& -\frac{\log\frac{a}{\alpha}}{\beta^2}z\frac{\Phi'(\gamma)}{\Phi(\gamma)}\\ &+ \frac{\log\frac{a}{\alpha}}{\beta^2}(1-z)\frac{\Phi'(\gamma)}{1-\Phi(\gamma)} ,
        \end{align}
        and the second-order partial derivatives are:
        \begin{align}
        \nonumber
            &\frac{\partial^2}{\partial \alpha\partial\beta}\log p(z|a,\theta) \\ &=-\frac{1}{\beta}\frac{\partial}{\partial\alpha}p(z|a,\theta) 
            + \frac{\log\frac{a}{\alpha}}{\alpha\beta^3}z\frac{\Phi''(\gamma)\Phi(\gamma)-\Phi'(\gamma)^2}{\Phi(\gamma)^2}\label{eq:partialalphbet}\\
            &\hspace*{1em}
                - \frac{\log\frac{a}{\alpha}}{\alpha\beta^3}(1-z)\frac{\Phi''(\gamma)(1-\Phi(\gamma))+\Phi'(\gamma)^2}{(1-\Phi(\gamma))^2} ,\nonumber
        \end{align}
        \begin{align}
        \nonumber
            &\frac{\partial^2}{\partial\alpha^2}\log p(z|a,\theta) \\ &=  -\frac{1}{\alpha}\frac{\partial}{\partial\alpha}\log p(z|a,\theta) 
            + \frac{1}{\alpha^2\beta^2}z\frac{\Phi''(\gamma)\Phi(\gamma)-\Phi'(\gamma)^2}{\Phi(\gamma)^2}\label{eq:partial2alph}\\
            &\hspace*{1em}
                - \frac{1}{\alpha^2\beta^2}(1-z)\frac{\Phi''(\gamma)(1-\Phi(\gamma))+\Phi'(\gamma)^2}{(1-\Phi(\gamma))^2} , \nonumber
        \end{align}  
        and
        \begin{align}
        \nonumber
            &\frac{\partial^2}{\partial\beta^2}\log p(z|a,\theta)\\ &= 
            -\frac{2}{\beta}\frac{\partial}{\partial\beta}\log p(z|a,\theta) 
            + \frac{\log^2\frac{a}{\alpha}}{\beta^4}z\frac{\Phi''(\gamma)\Phi(\gamma)-\Phi'(\gamma)^2}{\Phi(\gamma)^2} \label{eq:partial2bet}\\
            &
            - \frac{\log^2\frac{a}{\alpha}}{\beta^4}(1-z)\frac{\Phi''(\gamma)(1-\Phi(\gamma))+\Phi'(\gamma)^2}{(1-\Phi(\gamma))^2}.\nonumber
        \end{align}

    The expressions in Eqs.~(\ref{eq:partialalphbet}), (\ref{eq:partial2alph}), and (\ref{eq:partial2bet}) of the second-order partial derivatives of $p(z|a,\theta)$ need to be integrated over $\cZ$ and $\cA$. Summing over the discrete variable $z$ first replaces $z$ by $\Phi(\gamma)$ in the equations.
    Finally, if we denote $A_{01}$, $A_{02}$, $A_{11}$, $A_{12}$, $A_{21}$, $A_{22}$ by
        \begin{equation}\label{eq:Aij}
        \begin{aligned}
            A_{0u} &= \int_\cA\frac{\Phi'(\gamma(a))^2}{\Phi((-1)^{u+1}\gamma(a))}p(a)da,\\
            A_{1u} &= \int_\cA\log\frac{a}{\alpha}\frac{\Phi'(\gamma(a))^2}{\Phi((-1)^{u+1}\gamma(a))}p(a)da,\\
            A_{2u} &= \int_\cA\log^2\frac{a}{\alpha}\frac{\Phi'(\gamma(a))^2}{\Phi((-1)^{u+1}\gamma(a))}p(a)da,
        \end{aligned}
        \end{equation}
    for $u\in\{1,2\}$,
    then the information matrix $\cI(\theta)$ takes on the following form:
        \begin{equation}
        \label{eq:infmat}
            \cI(\theta)=\begin{pmatrix}
            \frac{1}{\alpha^2\beta^2}(A_{01} + A_{02}) & \frac{1}{\alpha\beta^3}(A_{11}+A_{12}) \\
            \frac{1}{\alpha\beta^3}(A_{11}+A_{12}) & \frac{1}{\beta^4}(A_{21}+A_{22})
        \end{pmatrix}.
        \end{equation}
    The integrals in Eq.~(\ref{eq:Aij}) are computed using Simpson's rule on a regular grid. The distribution $p(a)$ is approximated by kernel density estimation based on seismic signals.
    Finally, the Jeffreys prior is obtained by taking the square root of the determinant of the matrix defined in Eq.~(\ref{eq:infmat}).

        \subsection{Practical implementation}\label{sec:practseism}
        
        Section~\ref{sec:objprior} showed that knowing the probability distribution of the IM is required in order to calculate the Fisher information matrix. Without compromising on the general applicability of the methodology, let us consider in the following section the PGA as the IM. {Still, it is important to emphasize that this choice is purely illustrative and bears no consequence for the proposed methodology.} In this study, we used $10^5$ artificial seismic signals generated using the stochastic generator defined in \cite{Rezaeian2010} and implemented in~\cite{Sainct2020} from 97 real accelerograms selected in the European Strong Motion Database for $5.5 \leq \text{M} \leq 6.5$ and $\text{R} < 20$~km. The generation of seismic ground motions isn’t a necessity in the Bayesian framework (especially if a sufficient number of real signals are available), but it allows comparisons with the Monte-Carlo (MC) reference method for simulation-based approaches, as well as comparative studies of performance. Note that the artificial signals have the same PGA distribution as the real ones, as shown in Figure~\ref{fig:IM}.

        \begin{figure}
            \centering
            \includegraphics[width=180pt]{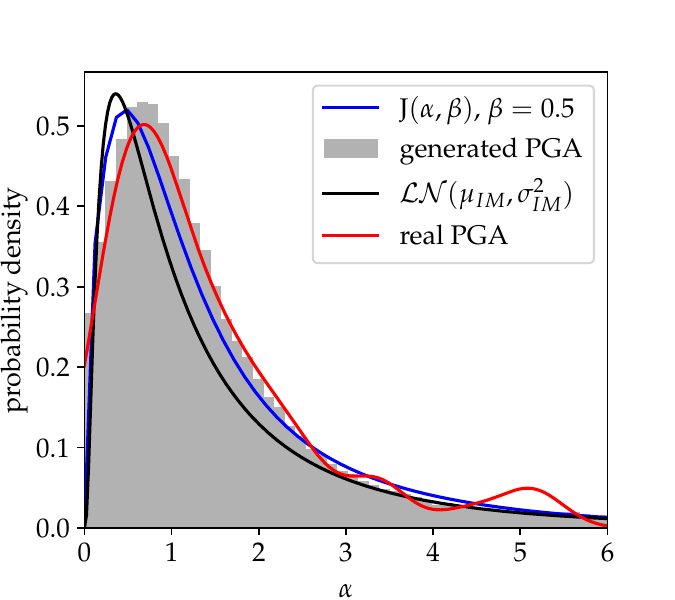} %
            \caption{Comparison of a sectional view of the Jeffreys prior w.r.t. $\alpha$ (blue line) with some PGA distributions: approximated distribution of real accelerograms via Gaussian kernels in red, histogram of the generated signals in gray, and log-normal fit of the distribution in black.
            }
            \label{fig:IM}
        \end{figure}

        In practice, the use of Markov Chain Monte-Carlo (MCMC) methods (see Section~\ref{sec:BayesFram}) to sample the \emph{a posteriori} distribution means that the prior must be evaluated (up to a multiplicative constant) many times during the calculations. Considering the computational complexity stemming from the integrals that need to be calculated, it was decided to perform the evaluations of the prior on an experimental design based on a fine-mesh grid of $\Theta$ (here $[0,+\infty)^2$) and to build an interpolated approximation of the Jeffreys prior matching this design. This strategy is more suitable for our numerical applications and very tractable because $\Theta$ is a two-dimensional domain. Figure~\ref{fig:jeff_prior} shows a plot of the Jeffreys prior. To be precise, $500\times500$ prior values were computed for $\alpha\in[10^{-5},10]$ and $\beta\in[10^{-3},2]$ and then processed in order to obtain a linear interpolation. 
     
        \begin{figure} %
            \centering
            \includegraphics[width=220pt]{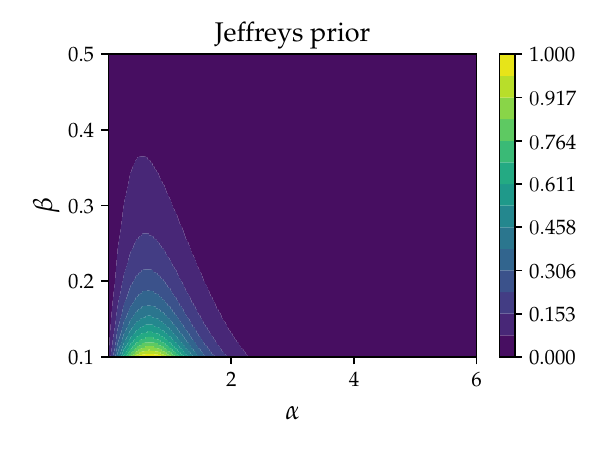}%
            \caption{The Jeffreys prior calculated from PGA and plotted in the subdomain $[0,6]\times [0.1,0.5]$.}
            \label{fig:jeff_prior}
        \end{figure}
        
        \subsection{Discussion}\label{sec:JeffDiscussion}

        The computational complexity of the Jeffreys prior is not in itself a major drawback. Since it depends exclusively on the distribution of the IM, the initial cost of the Jeffreys prior’s complex calculation would quickly be recovered in installations on the scale of a nuclear power plant, where it is necessary to determine the fragility curves of a large number of Structures and Components (SCs). Compared to methodologies that aim to define a prior based on mechanical calculations which are, by definition, specific to SCs, the generic character of the Jeffreys prior is a clear advantage. This will be explored in the applications section of this paper (Section~\ref{sec:application}). Moreover, the Jeffreys prior is completely defined and does not depend on any additional subjective choices. 
        
        The Jeffreys prior is known to be improper in numerous common cases (i.e., it cannot be normalized as a probability). This is relevant to our study, as evidenced in \ref{app:asymptotics}, where a calculation of the asymptotic behavior for different limits of $\theta = (\alpha,\beta)$ shows that the Jeffreys prior is indeed improper in our case. However, this characteristic is not an issue since our study focuses on the posterior, which is itself proper, as proven in \ref{app:asymptotics}. This is a critical issue, since MCMC algorithms would not make any sense if the posterior were improper. These asymptotic expansions also provide complementary and essential insight into the Jeffreys prior. They make it possible to understand that its behavior in $\alpha$ is similar to that of a log-normal distribution having the same median as that of the IM (i.e., here 1.1~m/s$^2$) with a variance calculated as the sum of the variance of the IM and of a term that depends on $\beta$. Figure~\ref{fig:IM} clearly illustrates this result.

\section{Estimation tools, competing approaches and performance evaluation metrics}\label{sec:tools}

    In this section, we will first present the Bayesian estimation tools and the MC reference method used to evaluate the relevance of the log-normal model when the amount of data allows it. We will then present two competing approaches, implemented in order to evaluate the performance of the Jeffreys prior in practical cases. On the one hand, we will apply the MLE method, widely used in literature, coupled with a bootstrap technique. On the other hand, we will apply a Bayesian technique implemented with the prior introduced by \citeauthor{Straub2008} \cite{Straub2008}. For a fair comparison, this study proposes to calibrate the latter according to the results of Figure~\ref{fig:IM}, which illustrates that in $\alpha$ the distribution is similar to the PGA distribution of the artificial and real signals. It would indeed be easy to calibrate it in such a way so as to skew comparisons, for instance by considering too large a variance. Finally, we will define performance evaluation metrics.
    
    \subsection{Fragility curves estimations via Monte-Carlo}
    \label{sec:reference}

        Here, let us assume the availability of a validation data set $(\mbf a^{\mathrm{MC}},$ $\mbf z^{\mathrm{MC}})$ $ =$ $ ( (a_i^{\mathrm{MC}})_{i=1}^{N^{\mathrm{MC}}} $, $(z_i^{\mathrm{MC}})_{i=1}^{N^{\mathrm{MC}}})$. This section describes how a fragility curve can be obtained from such a large data set by non-parametric estimations that can serve as a reference. This way, our estimations (based on a small data set and parametric estimations) can be compared with this reference.     
        Good candidates for estimating this reference are MC estimators, which estimate the expected number of failures locally w.r.t. the IM.

        First, we need to divide the IM values into sub-intervals and estimate the probability of failure for each. Sub-intervals of regular size should be avoided because the observed IMs are not uniformly distributed. We will therefore consider clusters of IMs, defined through the K-means, as suggested by \citeauthor{TREVLOPOULOS2019} \cite{TREVLOPOULOS2019}. 
        Given $N_c$ such clusters $(K_j)_{j=1}^{N_c}$, the MC fragility curve estimated at the centroids $(c_j)_{j=1}^{N_c}$ is expressed as
            \begin{equation}\label{eq:refMC}
                P_f^{\mathrm{MC}}(c_j) = \frac{1}{n_j}\sum_{i,\,a_i^{\mathrm{MC}}\in K_j}z_i^{\mathrm{MC}}  , 
            \end{equation}
        where $n_j$ is the sample size of cluster $K_j$. An asymptotic confidence interval for this estimator can also be derived from its Gaussian approximation. It is accepted that an MC-based fragility curve can be considered a reference curve because it is not based on assumptions.

    \subsection{Fragility curves estimations in the Bayesian framework} \label{sec:BayesFram}

        The most relevant method in order to benefit from the Bayesian theory introduced in Section~\ref{sec:objprior} and the reference prior construction presented in Section~\ref{sec:construction} consists in deriving the posterior defined in Eq.~(\ref{eq:posterior}). It then becomes possible to generate, according to that distribution, samples of $\theta$ conditioned on the observed data. These \emph{a posteriori} generations of $\theta$ can be obtained using MCMC methods. In this study, we have implemented an adaptive Metropolis-Hastings (M-H) algorithm with a Gaussian transition kernel and a covariance adaptation process \cite{Haario2001}. Such an algorithm allows sampling from a probability density known up to a multiplicative constant. In this context, the \emph{a posteriori} samples of $\theta$ can be used to define credibility intervals for the log-normal estimations of the fragility curves.
    
        \subsection{Multiple MLE by bootstrapping}\label{sec:bootstrap}

            The best known parameter estimation method is the MLE, defined as the maximal argument of the likelihood derived from the observed data:
            \begin{equation}\label{eq:MLE}
                \hat\theta^{\mathrm{MLE}}_k = \argmax_{\theta\in\Theta} \ell_k(\mbf z|\mbf a, \theta).
            \end{equation}

            A common method for obtaining a wide range of $\theta$ estimations consists in calculating multiple MLEs by bootstrapping. Denoting the data set size by $k$, bootstrapping consists in doing $L$ independent draws with the replacement of $k$ items within the data set. These draws lead to $L$ different likelihoods from the $k$ initial observations, and so to $L$ values of the estimator, which can then be averaged. This is a very common approach for fragility curves (e.g., \cite{Shinozuka2000, Lallemant2015, Gehl2015, Baker2015, WangF2020}). The convergence of the MLE and the relevance of this method are detailed in \cite{VanDerVaart1992}. However, the relevance of the bootstrap method is often limited by the irregularity of its results for small values of $k$ (e.g., \cite{Zentner2017}). In this context, the $L$ values of $\theta$ are used to define confidence intervals for the log-normal estimations of the fragility curves.
        
        \subsection{{Example of a prior found in literature for log-normal seismic fragility curves}} %
        \label{sec:posterioriSimul}
        {For comparison purposes, we selected the prior} suggested by \citeauthor{Straub2008}---called the SK prior---which is defined as the product of a normal distribution for $\ln(\alpha)$ and the improper distribution $1/\beta$ for $\beta$, namely:
                \begin{equation}\label{eq:Straubprior}
                    \pi_{SK}(\theta)\propto\frac{1}{\alpha\beta} \exp\Big( -\frac{(\log\alpha-\mu)^2}{2\sigma^2}\Big).
                \end{equation}
        In \cite{Straub2008}, the parameters $\mu$ and $\sigma$ of the log-normal distribution are chosen to generate a non-informative prior. 
        As specified in the introduction to Section~\ref{sec:tools}, for a fair comparison with the approach proposed in this paper, we decided to pick $\mu$ and $\sigma$ as equal to the mean and the standard deviation of the logarithm of the IM. This choice is consistent with the fact that the Jeffreys prior is similar to a log-normal distribution with these parameters (see Figure~\ref{fig:IM}). The prior $\pi_{SK}(\theta)$ is plotted in Figure~\ref{fig:Straubprior}.

            \begin{figure}
                \centering
                \includegraphics[width=210pt]{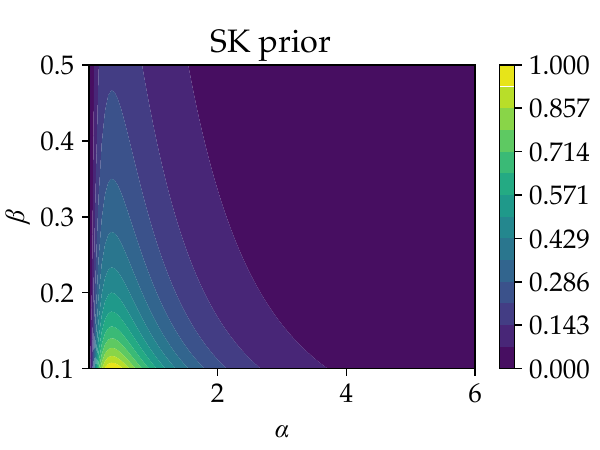} %
                \caption{Prior suggested by \citeauthor{Straub2008} \cite{Straub2008} and plotted in the subdomain $[0,6]\times [0.1,0.5]$. It is expressed in Eq.~(\ref{eq:Straubprior}) and scaled to a log-normal estimation of the PGA's distribution.}
                \label{fig:Straubprior}
            \end{figure}

        An analysis of the posterior obtained from the SK prior is given in \ref{app:asymptotics}. It shows that the posterior is improper, which jeopardizes the validity of any \emph{a posteriori} estimation using MCMC methods. This could however be mitigated by truncating w.r.t. $\beta$. This issue persists in the authors' original framework, which is slightly different from ours. This was confirmed in \ref{app:SKreview}.
            
    \subsection{Performance evaluation metrics}\label{sec:metrics}
    
        In order to obtain a clear view of the performance of the proposed approach, we considered two quantitative metrics that can be calculated for each of the methods described in the previous subsections. Considering the sample $(\mbf a, \mbf z) $, we denote by $a \mapsto P_f^{|\mbf a,\mbf z}(a)$ the random process defined as the fragility curve conditional to the sample (the probability distribution of $P_f^{|\mbf a,\mbf z}(a)$ is inherited from the \emph{a posteriori} distribution of $\theta$). For each value $a$ the $r$-quantile of the random variable $P_f^{|\mbf a,\mbf z}(a)$ is denoted by $q_r^{|\mbf a,\mbf z}(a)$. We can then define:
    \begin{itemize}
            \item The conditional quadratic error:
                \begin{align}\label{eq:quaderror}
                    \cE^{Q|\mbf a,\mbf z} &= \EE\big[\| P_{f}^{|\mbf a,\mbf z} - P_f^{\mathrm{MLE}} \|_{L^2}^2|\mbf a,\mbf z \big]\\ &= \int_{0}^{A_{\rm max}} \EE\big[ (P_{f}^{|\mbf a,\mbf z}(a) - P_f^{\mathrm{MLE}} (a))^2 |\mbf a,\mbf z \big] da.\nonumber
                \end{align}
        $P_f^{\mathrm{MLE}}$ stands for the log-normal estimation of the fragility curve obtained by the MLE (see Section~\ref{sec:reference}) taking into account all the data available in the case study. We further checked that this estimation was close to the reference curve obtained by MC whenever possible (see Section~\ref{sec:application}).
                            
            \item  The conditional width of the $1-r$ credibility zone for the fragility curve:
                \begin{align}\label{eq:scaleerror}
                    \cS^{r|\mbf a,\mbf z} &= \|q_{1-{r/2}}^{|\mbf a,\mbf z} - q_{{r/2}}^{|\mbf a,\mbf z}\|_{L^2}^2\\  &= \int_{0}^{A_{\rm max}} ( q_{1-{r/2}}^{|\mbf a,\mbf z} (a) - q_{{r/2}}^{|\mbf a,\mbf z}(a))^2 da  .\nonumber
                \end{align}
        \end{itemize}

    To estimate such variables, we simulated a set of $L$ %
    fragility curves $( P_f^{\theta_i|\mbf a,\mbf z})_{i=1}^L$ where $(\theta_i)_{i=1}^L$ is a sample of the \emph{a posteriori} distribution of the model parameters obtained by MCMC. The empirical quantiles $\hat q_r^{\theta_{i=1}^L|\mbf a,\mbf z}(a)$  of $( P_f^{\theta_i|\mbf a,\mbf z}(a))_{i=1}^L$  are approximations of the quantiles $q_r^{|\mbf a,\mbf z}(a)$ of the random variable $P_f^{|\mbf a,\mbf z}(a)$. We derive:
        \begin{itemize}
            \item The approximated conditional quadratic error:
                \begin{equation}\label{eq:quaderrorapprox}
                    \hat\cE_{L}^{Q|\mbf a,\mbf z} =\frac{1}{L}\sum_{i=1}^L\| P_{f}^{\theta_i|\mbf a,\mbf z} - P_f^{\mathrm{MLE}} \|_{L^2}^2.
                \end{equation}
            \item The approximated conditional width of the $1-r$ credibility zone for the fragility curve:
                \begin{equation}\label{eq:scaleerrorapprox}
                    \hat\cS_L^{r|\mbf a,\mbf z} = \|\hat q_{1-{r/2}}^{\theta_{i=1}^L|\mbf a,\mbf z} - \hat q_{{r/2}}^{\theta_{i=1}^L|\mbf a,\mbf z} \|_{L^2}^2.
                \end{equation}
        \end{itemize}
        The $L^2$ norms are integrals over $a\in [0,A_{\rm max}]$ which are approximated numerically using Simpson's interpolation on sub-intervals of regular size  $0=A_0<\dots<A_p=A_{\rm max}$. In the following examples, we shall use $A_0=0$, $A_{\rm max}=12\,\mathrm{m/s^2}$, and $p=200$.

        For the MLE with bootstrapping, we can define a conditional quadratic error similarly to Eq.~(\ref{eq:quaderrorapprox}) and a conditional width of the $1-r$ confidence interval similarly to Eq.~(\ref{eq:scaleerrorapprox}) using a bootstrapped sample $(\theta_i)_{i=1}^L$.

\section{Numerical applications}\label{sec:application}

In this section, we will examine two case studies.
These leverage the many simulation data sets available that have been previously computed for validation purposes. They will be used in the derivation of a reference fragility curve (as suggested in Section~\ref{sec:reference}), and allow us to validate the corresponding log-normal models. The first case, described in Section~\ref{sec:onl}, deals with a nonlinear oscillator,
for which $N_s=10^5$ nonlinear simulations have been implemented for validation purposes. The second case study, described in Section~\ref{sec:ASG}, deals with a piping system which is part of the secondary cooling system of a French Pressurized Water Reactor. Due to the high computational cost, only $N_s=10^4$ simulations have been performed for this case. In both cases, estimations are performed using different testing data sets of a size $k$ chosen as negligible compared to $N_s$. These testing data sets are taken from the set of available nonlinear dynamical simulation results. {A third case study is presented as supplementary material in order to showcase how our method could be applied to practical experiments.}

\subsection{Case study 1: nonlinear oscillator \label{sec:onl}}

    This first case study---depicted in Figure~\ref{fig:KH}---relates to a single-degree-of-freedom elastoplastic oscillator with kinematic hardening. This mechanical system illustrates the essential features which can be found in the nonlinear responses of some real-world structures under seismic excitation and has, for this reason, already been used in several studies~\cite{TREVLOPOULOS2019, Sainct2020, Gauchy2021}. The motion of a unit mass $m$ can be described by the equation:
    \begin{equation}
    \ddot{y}(t) + 2 \zeta \omega_{\text{L}}\dot{y}(t) + f(t) = -s(t) \ ,
    \end{equation}
    with $s(t)$ a seismic signal, $\dot{y}(t)$ and $\ddot{y}(t)$ respectively the relative velocity and acceleration of the mass, $\zeta$ the damping ratio, and $\omega_{\text{L}}$ the circular frequency. The relevant EDP is the absolute maximum value of the mass’ displacement, i.e., $\max_{t\in [0, T]}|y(t)|$, where $T$ is the duration of the seismic excitation. The failure criterion $C$ is chosen to be the $90\%$-level quantile of the maximum displacement calculated with $10^5$ artificial signals, i.e., $C = 8.0 \; 10^{-3}$~m.
    Figure \ref{fig:ref-osci} compares the MC-based reference fragility curve $P_f^{\mathrm{MC}}$ (Eq.~(\ref{eq:refMC})) with its log-normal estimation $P_f^{\mathrm{MLE}}$, both estimated using the results of the $10^5$ simulations. In this case, the log-normal fragility curve is a good approximation of the reference curve.

    \begin{figure} %
    \centering
    \includegraphics[width=5cm]{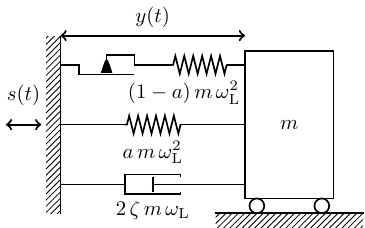}
    \caption{Elastoplastic oscillator with kinematic hardening, with parameters $f_{\text{L}} = 5$ Hz and $\zeta = 2\%$. The yield limit is $Y = 5.10^{-3}$~m, and the post-yield stiffness is $20\%$ of the elastic stiffness, i.e., $a = 0.2$.}
    \label{fig:KH}
    \end{figure}

    \begin{figure}
        \centering
        \includegraphics[width=180pt]{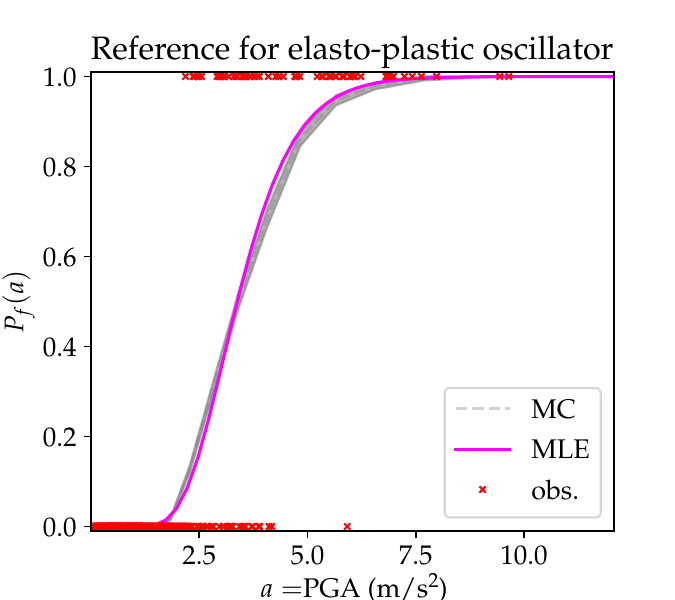} %
        \caption{Reference fragility curve $P_f^{\mathrm{MC}}$ (see Section~\ref{sec:reference}) compared with $P_f^{\mathrm{MLE}}$ (see Section~\ref{sec:metrics}) for the elastoplastic oscillator (case study 1). Both curves were computed using the full data set generated ($10^5$ items). The red crosses represent the observations; 0 means no failure, while 1 means failure.}
        \label{fig:ref-osci}
    \end{figure}

    The fragility curve estimations are shown in Figure~\ref{fig:curvesONL}. They are obtained from $L=5000$ samples of $\theta$ generated with the implemented statistical methods (see Section~\ref{sec:tools}), which are based on two samples of nonlinear dynamical simulations of sizes $k=20$ and $k=30$. Although the nature of the two intervals compared is different---credibility interval for the Bayesian framework and confidence interval for the MLE---, these results clearly illustrate the advantage of the Bayesian framework over the MLE for small samples. With the MLE, irregularities characterized by null estimates of $\beta$ appear, resulting in "vertical" confidence intervals. In \ref{app:asymptotics}, we established that the likelihood is easily maximized for $\beta=0$ when samples are partitioned into two disjunct subsets when classified according to IM values: one subset for which there is no failure and one for which there is failure. Moreover, when few failures are observed in the initial sample, the bootstrap technique can lead to the generation of a large number of samples that maximize the likelihood at $\beta=0$. This is better evidenced by an examination of the raw values of $\theta$ generated in Figure \ref{fig:scatterONL}. The degenerate $\beta$ values resulting from the MLE appear clearly but, although it should theoretically also be affected, the Bayesian framework shows no evidence of a similar phenomenon for this type of samples.

            \begin{figure*}
                \centering%
                \includegraphics[width=130pt]{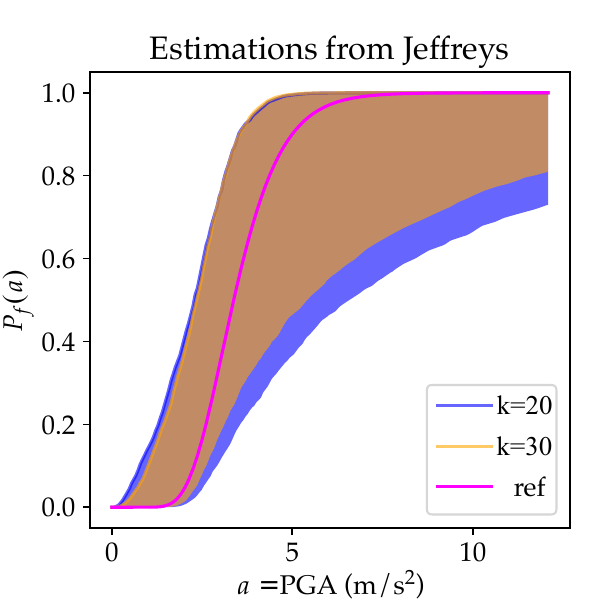}%
                \includegraphics[width=130pt]{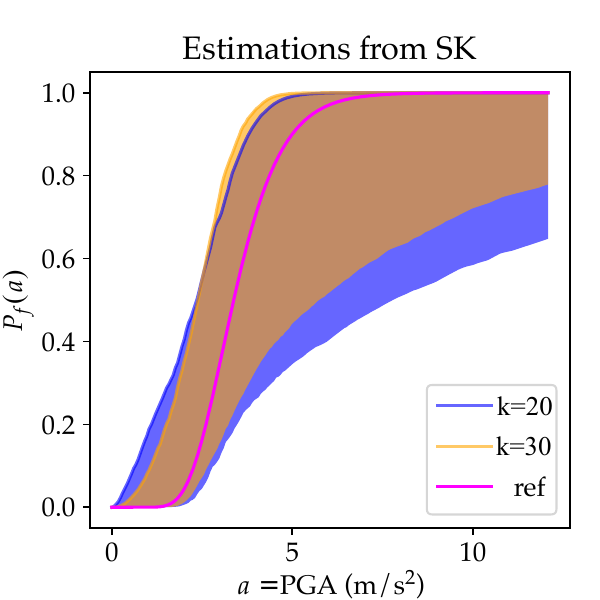}%
                \includegraphics[width=130pt]{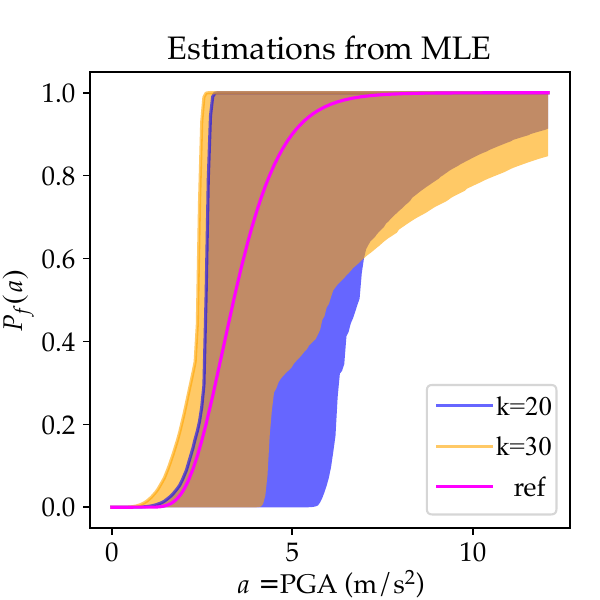}%
                \caption{$95\%$ credibility (for Bayesian estimation) or confidence (for the MLE) intervals of fragility curve estimations for the elastoplastic oscillator, obtained from a total of $L=5000$ estimations of $\theta$ using the {statistical methods introduced in Section~\ref{sec:reference}}: (from left to right) Bayesian estimation using the Jeffreys prior, Bayesian estimation using the SK prior, and MLE with bootstrapping. For each of these, we considered two data samples {of nonlinear dynamical simulations} of two different sizes ($k=20$ in blue, $k=30$ in orange). $P_f^{\mathrm{MLE}}$ {(see Section \ref{sec:metrics}) is plotted in magenta}.}
                \label{fig:curvesONL}
            \end{figure*}

            \begin{figure*}
                \centering%
                \includegraphics[width=130pt]{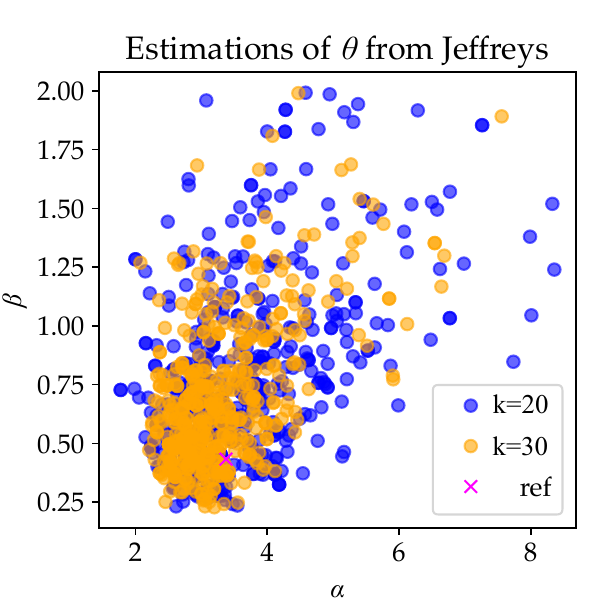}%
                \includegraphics[width=130pt]{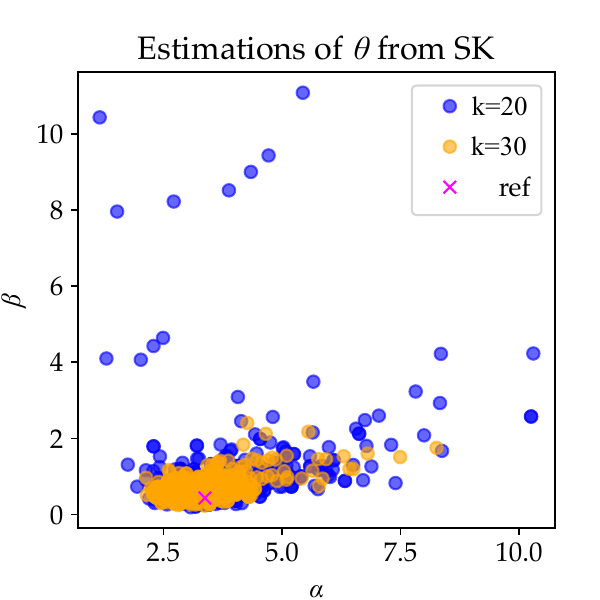}%
                \includegraphics[width=130pt]{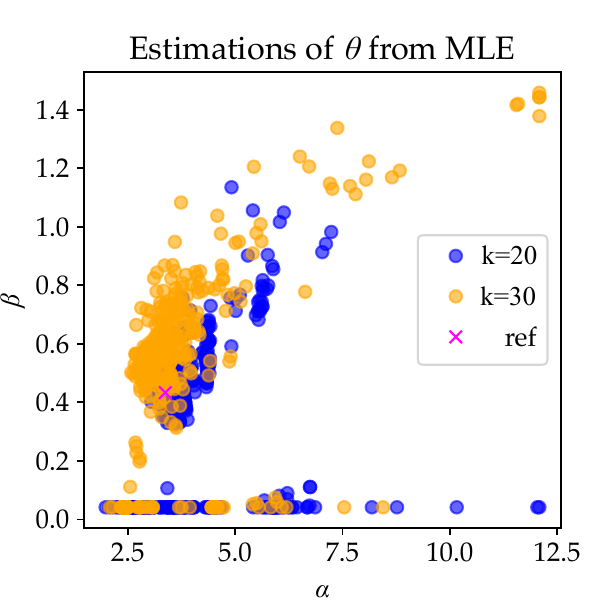}%
                 \caption{Scatter plots of the generated $\theta$ for the estimation of the fragility curves presented in Figure~\ref{fig:curvesONL} for the elastoplastic oscillator. For all three statistical methods, we plotted $500$ points out of the $L=5000$ $\theta=(\alpha,\beta)$ {estimated} with two data sets of nonlinear dynamical simulations (of size $k=20$ in blue and $k=30$ in orange). The magenta crosses represent $\theta^{\mathrm{MLE}}$, used for the computation of $P_f^{\mathrm{MLE}}$ (see Section~\ref{sec:metrics}). This figure reveals both the outliers generated from the SK prior (center) and the irregularities characterized by null estimates of $\beta$ for the coupled MLE and bootstrap approach (right).}
                \label{fig:scatterONL}
            \end{figure*}

        Since the SK prior is calibrated to look like the Jeffreys prior, Figure~\ref{fig:curvesONL} shows a strong similarity between the Bayesian estimations of the fragility curves obtained with these two priors. However, Figure~\ref{fig:scatterONL} (middle) shows that many outliers are obtained with the SK prior. These values explain why the credibility intervals of the fragility curves are larger with the SK prior when $k = 20$. This observation is supported theoretically by the calculation provided in \ref{app:asymptotics}. There is actually a better convergence of the Jeffreys prior toward $0$ when $\beta\to\infty$. This superior asymptotic behavior obviously results in posteriors that happen to assign a lower probability to outlier points (a phenomenon particularly noticeable when the data sample is small) as well as to the weight of the likelihood within the posterior.
        
        For a better understanding of this phenomenon, we calculated the quantitative metrics defined in Section~\ref{sec:metrics}. For any $k$ ranging from $15$ to $100$, we conducted $m=200$ different draws of observations $(\mbf a^{(j)},\mbf z^{(j)})_{j=1}^m$ in order to derive the metrics $\hat\cE_{L,R}^{Q|\mbf a^{(j)},\mbf z^{(j)}}$, $\hat\cS_{L,R}^{r|\mbf a^{(j)},\mbf z^{(j)}}$, $j\in\{1,\dots,m\}$, $R\in\{$`MLE', `SK', `Jeffreys'$\}$, $L=5000$, $1-r=95\%$. 
        The corresponding means and $95\%$-confidence intervals are plotted in Figure~\ref{fig:errors}. Firstly, these diagrams demonstrate the benefits of the Bayesian framework compared to the MLE approach for small observation sets. Secondly, the compared performance of the Jeffreys and SK posteriors is highlighted by the confidence interval endpoints of the quadratic error and the credibility interval. Specifically, the latter highlights the effect of the superior asymptotic behavior of the Jeffreys prior along the width of the credibility interval. It shows variations similar to but smaller than the SK prior, thus highlighting its capacity to generate fewer outliers for the pair $(\alpha, \beta)$, as expected.

        \begin{figure*}
                \centering%
                \includegraphics[width=154pt]{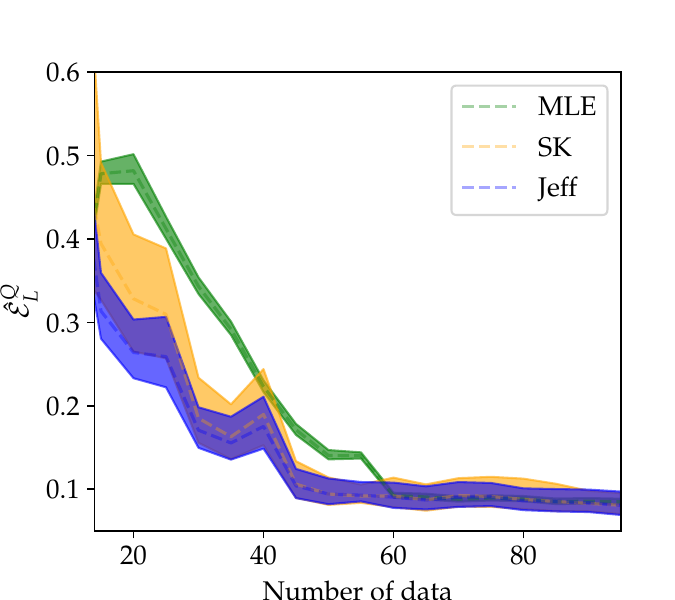} %
                \includegraphics[width=154pt]{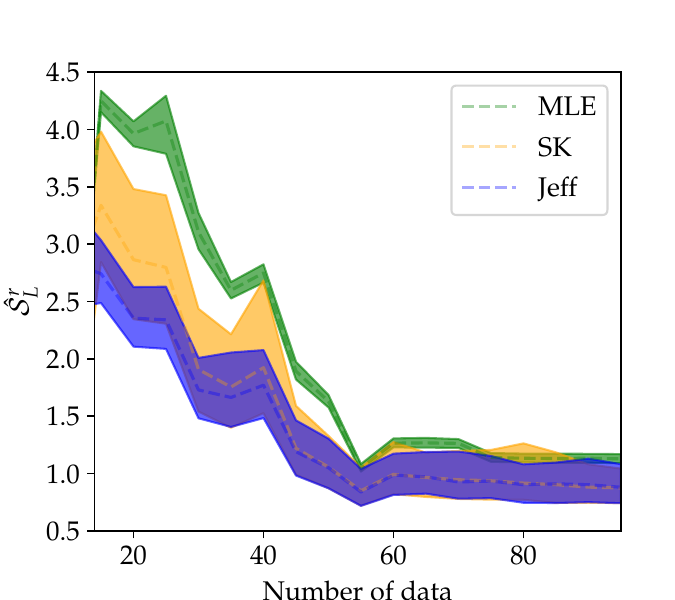}%
                \caption{%
                Performance evaluation metrics (see Section~\ref{sec:metrics}) for the elastoplastic oscillator computed by replications from independent draws in the full data set {of nonlinear dynamical simulations} and for sample sizes ranging from $k=15$ to $100$. Left: the dashed lines plot the quadratic errors as a function of the number of observations, and the shaded areas show their confidence intervals. Right: the dashed lines plot the widths of the credibility (for the Bayesian estimation) or confidence intervals
                (for the MLE), and the shaded areas show their confidence intervals.}
                \label{fig:errors}
            \end{figure*}

\subsection{Case study 2 : piping system \label{sec:ASG}}

This second case study deals with a piping system which is part of the secondary cooling system of a French Pressurized Water Reactor. This piping system was studied experimentally and numerically within the framework of the ASG program \cite{TOUBOUL1999}. Figure~\ref{fig:ASG} (left) shows a view of the piping mock-up, mounted on the Azalee shaking table of the EMSI laboratory of CEA/Saclay, whereas the finite element model (FEM)---based on beam elements---is shown in Figure~\ref{fig:ASG} (right). The FEM was implemented using CAST3M~\cite{CAST3M}, a homemade FE code, and validated through an experimental campaign.

	\begin{figure*} %
		\centering		
		\includegraphics[width=6cm]{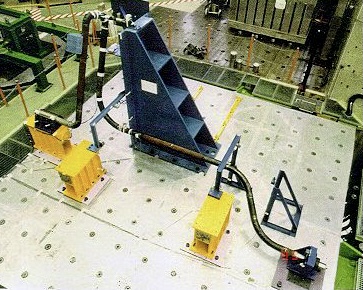}
		\hspace{0.5cm}
		\includegraphics[trim= 1cm  3.8cm 12cm 1.5cm, clip,width=5cm]{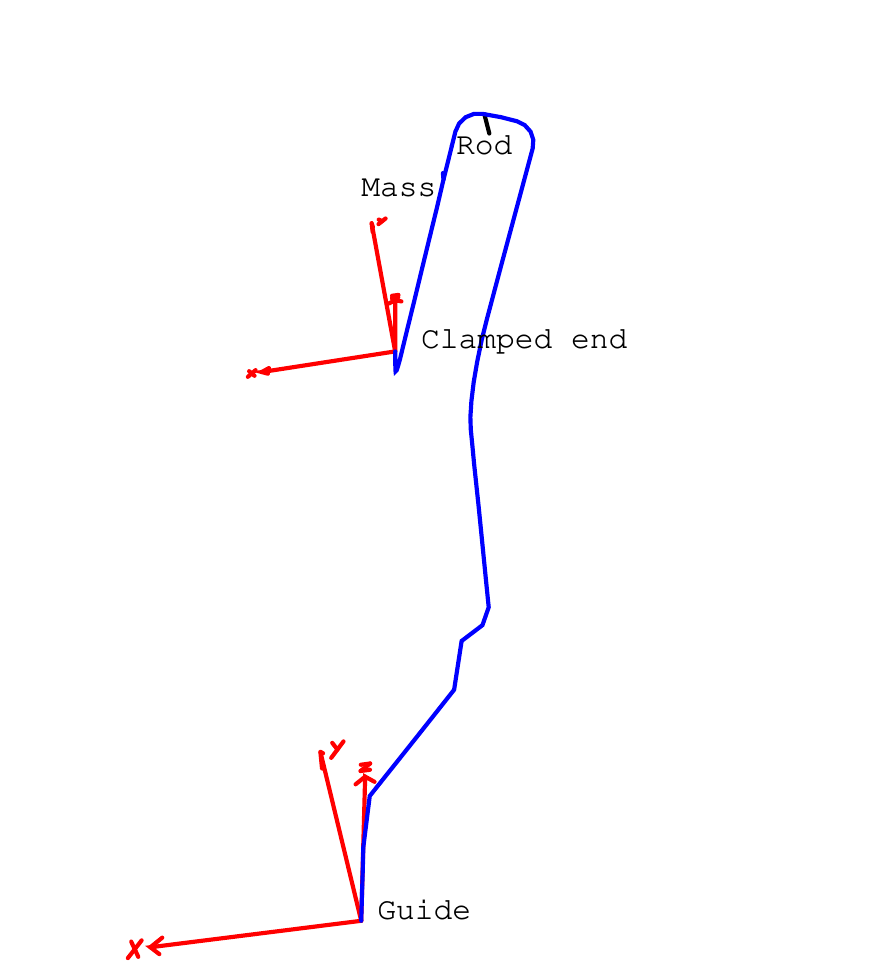}
		\caption{Overview of the piping system on the Azalee shaking table (left) and mock-up FEM (right).}
		\label{fig:ASG}
	\end{figure*}

    The mock-up is a pipe filled with unpressurized water. The pipe has an external diameter of 114.3~mm and a thickness of 8.56~mm, with an elbow characteristic parameter of 0.47, and is made of TU42C carbon steel. There are three elbows along the pipe as well as a 120~kg mass standing in for a valve. This mass accounts for more than 30\% of the total mass of the mock-up. One end of the mock-up is clamped while the other is supported by a guide in order to prevent displacements along the X and Y axis. In addition, a rod is placed on top of the mock-up to limit the mass displacements along the Z axis (see Figure~\ref{fig:ASG} (right)). During the tests, the excitation is only imposed along the X axis. For this study, the artificial signals are filtered by a fictional, 2\%-damped linear single-mode building at 5~Hz, 5~Hz being the first eigenfrequency of the 1\%-damped piping system. The failure criterion is considered to be an excessive out-of-plane rotation of the elbow located near the clamped end of the mock-up, as recommended in \cite{TOUBOUL2006}. The critical rotation considered is $4.1^\circ$. This value is the 90\%-level quantile of a sample of nonlinear numerical simulations of size $10^4$. 
    
    Figure \ref{fig:ref-ASG} shows the comparison between the reference, MC-based fragility curve $P_f^{\mathrm{MC}}$ (Eq.~(\ref{eq:refMC})) and its log-normal estimation $P_f^{\mathrm{MLE}}$, both estimated using the results of $10^4$ simulations. Here, the log-normal fragility curve is also found to be a good approximation of the reference curve.

    \begin{figure}
        \centering
        \includegraphics[width=180pt]{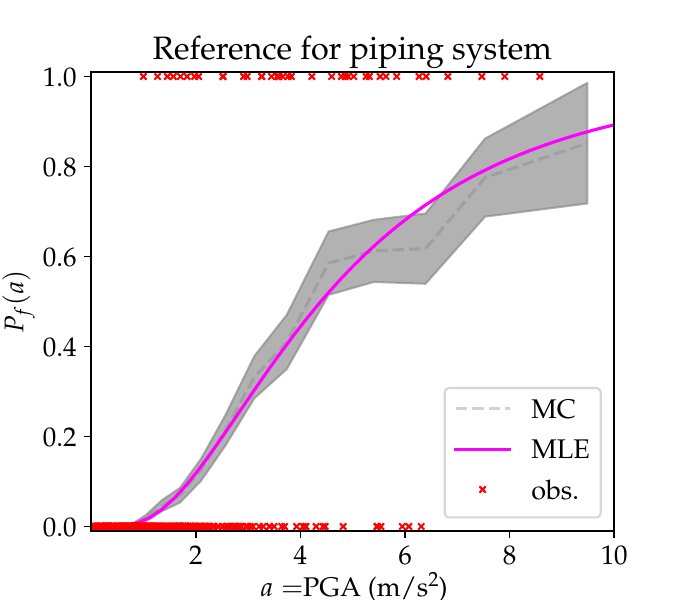}%
        \caption{Reference fragility curve $P_f^{\mathrm{MC}}$ (see Section~\ref{sec:reference}) compared with $P_f^{\mathrm{MLE}}$ (see Section~\ref{sec:metrics}) for the piping system (case study 2). Both curves were computed using the full data set generated ($10^4$ items). The red crosses represent the observations; 0 means no failure, while 1 means failure.}
        \label{fig:ref-ASG}
    \end{figure}

    Estimations similar to the first case study’s were performed here, and were found to highlight the same trends.
    As expected, for sets of $L=5000$ values of $\theta$---generated with each statistical method considered in this work---and for two sample sizes $k = 20$ and $k = 30$ of nonlinear dynamical simulations, Figure~\ref{fig:ASG-curves} shows the superiority of the Bayesian framework over the coupled MLE and bootstrap approach. Just like in the case study with the oscillator, irregularities appear with the MLE-based approach: the confidence intervals are similarly "quasi-vertical", reflecting the fact that many estimations of $\beta$ are equal to $0$. Moreover, the credibility intervals are wider with the SK prior than with the Jeffreys prior, which here too can be interpreted as an increased number of outliers of $\theta$ being generated with the SK prior. These observations are clearly supported by the results presented in Figure~\ref{fig:ASG-scatter}.
    
    For a more complete overview of their relative performances, the evaluation metrics described in Section~\ref{sec:metrics} have been computed in the same way as for the first case study: $m=200$ draws of data samples $(\mbf a^{(j)}, \mbf z^{(j)})_{j=1}^m$ have been randomly chosen to compute, for any value of $k$ ranging from $15$ to $100$, $m$ values of the metrics $\hat\cE_{L,R}^{Q|\mbf a^{(j)},\mbf z^{(j)}}$, $\hat\cS_{L,R}^{r|\mbf a^{(j)},\mbf z^{(j)}}$, $R\in\{$`MLE', `SK', `Jeffreys'$\}$, $L=5000$, $1-r=95\%$ . Their means and confidence intervals are presented in Figure~\ref{fig:ASG-errors}. These results confirm the superior performance of the Jeffreys prior compared to the other two methods.    

    \begin{figure*}
        \centering
        \includegraphics[width=130pt]{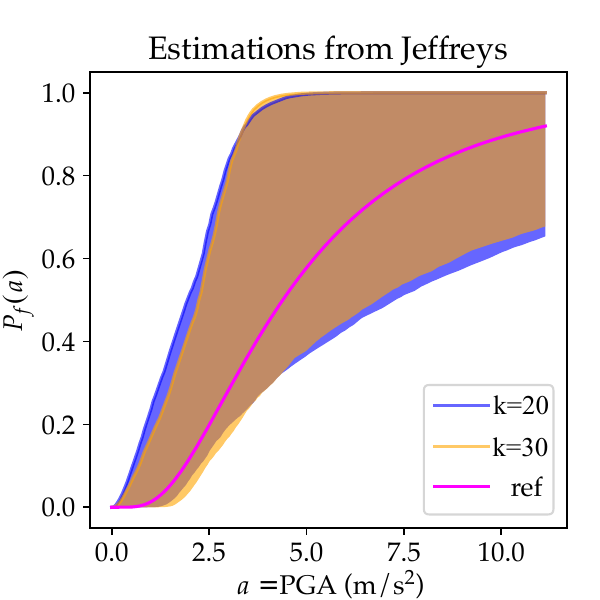}%
        \includegraphics[width=130pt]{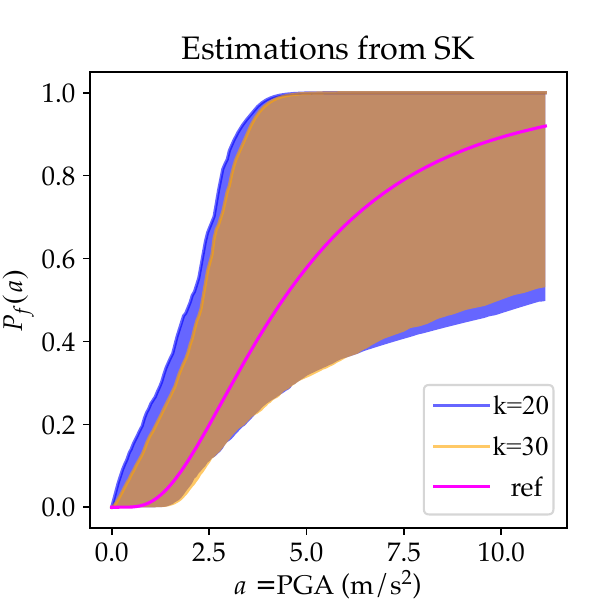}%
        \includegraphics[width=130pt]{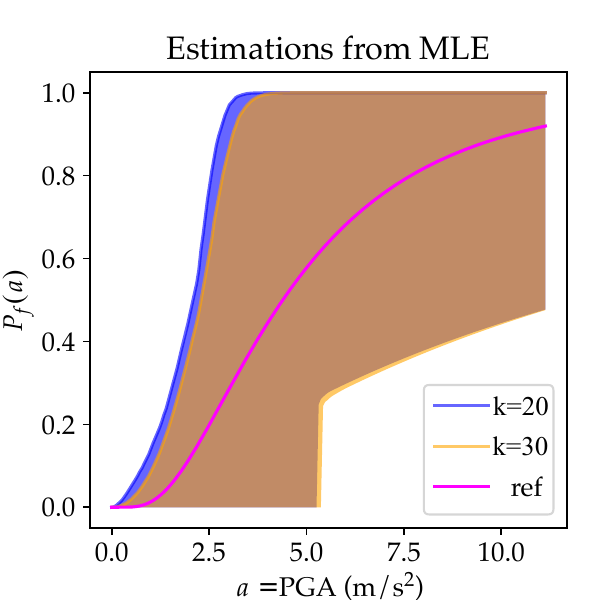}%
        \caption{$95\%$ credibility (for Bayesian estimation) or confidence (for the MLE) intervals of fragility curve estimations for the piping system, obtained from a total of $L=5000$ estimations of $\theta$ using the {statistical methods introduced in Section~\ref{sec:reference}}: (from left to right) Bayesian estimation using the Jeffreys prior, Bayesian estimation using the SK prior, and MLE with bootstrapping. For each of these, we considered two data samples {of nonlinear dynamical simulations} of two different sizes ($k=20$ in blue, $k=30$ in orange). $P_f^{\mathrm{MLE}}$ (see Section~\ref{sec:metrics}) is plotted in magenta.}
        \label{fig:ASG-curves}
    \end{figure*}

    \begin{figure*}%
        \centering%
        \includegraphics[width=130pt]{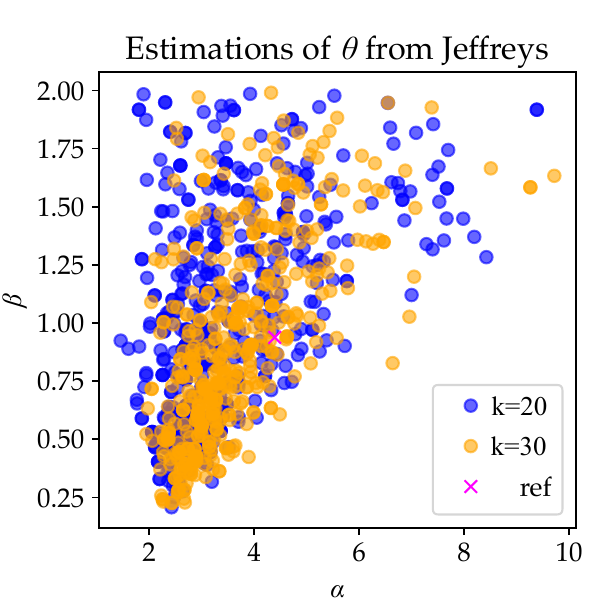}%
        \includegraphics[width=130pt]{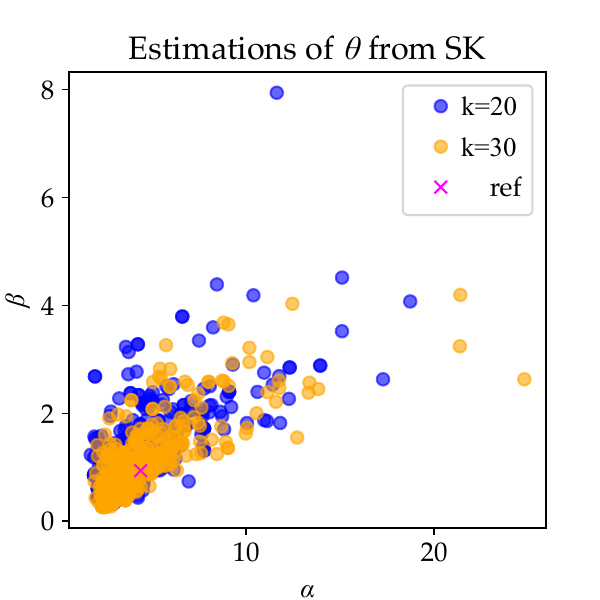}%
        \includegraphics[width=130pt]{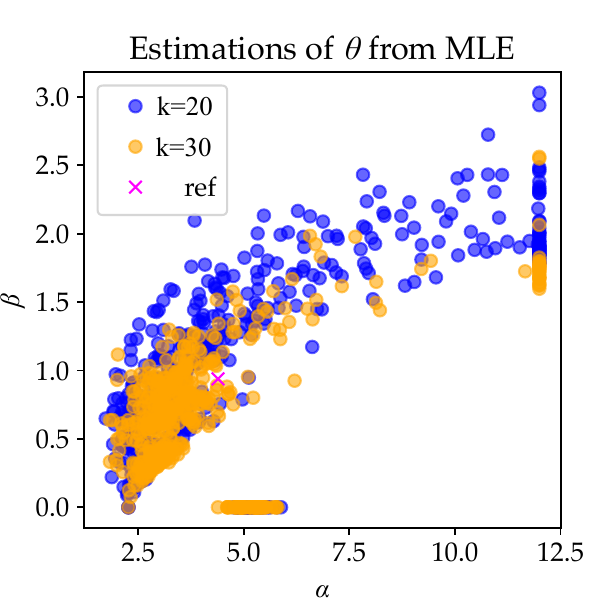}%
        \caption{Scatter plots of the generated $\theta$ for the estimation of the fragility curves presented in Figure~\ref{fig:ASG-curves} for the piping system. For all three statistical methods, we plotted $500$ points out of the $L=5000$ $\theta=(\alpha,\beta)$ estimated with two data sets of nonlinear dynamical simulations (of size $k=20$ in blue and $k=30$ in orange). The magenta crosses represent $\theta^{\mathrm{MLE}}$, used for the computation of $P_f^{\mathrm{MLE}}$ {(see Section~\ref{sec:metrics})}. This figure unveils both the outliers generated from the SK prior (center) and the irregularities characterized by null estimates of $\beta$ for the coupled MLE and bootstrap approach (right).}
        \label{fig:ASG-scatter}
    \end{figure*}

    \begin{figure*}
        \centering%
            \includegraphics[width=154pt]{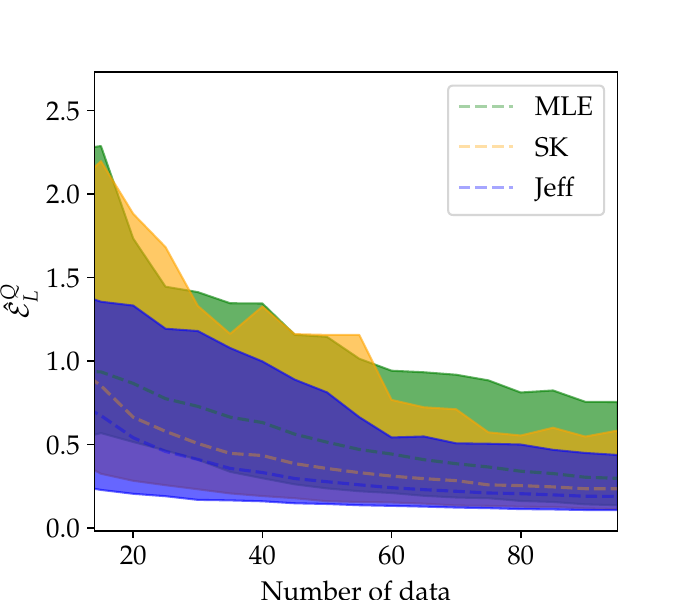} %
            \includegraphics[width=154pt]{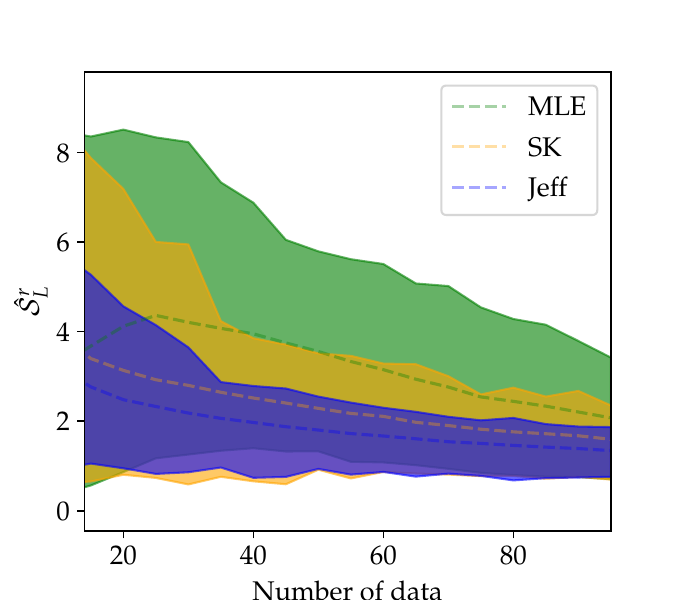}%
            \caption{Performance evaluation metrics (see Section~\ref{sec:metrics}) for the piping system computed by replications from independent draws in the full data set of nonlinear dynamical simulations and for sample sizes ranging from $k=15$ to $100$. Left: the dashed lines plot the quadratic errors as a function of the number of observations, and the shaded areas show their confidence intervals. Right: the dashed lines plot the widths of the credibility intervals (for the Bayesian estimation) or confidence intervals (for the MLE), and the shaded areas show their confidence intervals.}
        \label{fig:ASG-errors}
    \end{figure*}

\section{Conclusion \label{sec:conclusion}}

Assessing the seismic fragility of Structures and Components (SCs) is a daunting task when data is limited. The performance of the Bayesian framework in this kind of situation is well-known. Nevertheless, choosing a prior remains difficult because its impact on the \emph{a posteriori} distribution cannot be neglected, and therefore neither can its impact on the estimation of any relevant element linked to the fragility curves. 

Elaborating on the reference prior theory in order to define an objective prior, we derived, for the first time in this field of study, the Jeffreys prior for the log-normal model, with binary data that indicates the state of the structure (e.g., failure or non-failure). This prior is completely defined, and does not depend on any additional subjective choice.

This work is also an opportunity to develop a better theoretical understanding of the conditions that result in non-degenerate fragility curves (i.e., curves not taking the form of unit-step functions) in practice. This issue is quite inevitable when data is limited, since such curves are a result of the very composition of the sample. Although this issue affects every approach, we could demonstrate rigorously---i.e., both theoretically and numerically---the robustness and advantages of the proposed approach over the traditional ones found in the literature for estimating fragility curves in terms of regularization (i.e., the absence of degenerate functions when sampling fragility curves with the \emph{a posteriori} distribution) and stability (i.e., the absence of outliers when sampling the \emph{a posteriori} distribution of the parameters). {The Jeffreys prior therefore leads to more robust credibility intervals.} 

Although its numerical implementation is complex--- more so than a prior defined as the product of two classical distributions such as log-normal distributions, for instance---it is not a major issue. As a matter of fact, since it depends solely on the distribution of the IM, the ``cost'' of the initial calculation would quickly be recovered on the scale of an industrial installation containing several SCs whose fragility curves must be estimated. {For example, compared to methodologies that aim to define a prior based on mechanical calculations for a given SC, the advantage of the Jeffreys prior lies in its generic nature. The fact that it can be applied to all SCs subjected to the same seismic scenario largely compensates for the implementation of mechanical studies dedicated to each relevant SC. Additionally, the methodology can be implemented with any relevant IM without creating additional complexity.}

\section{Acknowledgments}
This research was supported by the CEA (French Alternative Energies and Atomic Energy Commission) and the SEISM Institute (www.institut-seism.fr/en/).

\appendix

\section{Prior and posterior asymptotics}\label{app:asymptotics}

This appendix is dedicated to the asymptotic study of the density functions considered in this paper. These calculations provide a proof of the proper characteristics of the posterior distributions, which are needed in order to validate the MCMC methods used for sampling. The asymptotic expansions of the Jeffreys prior can also be compared to the ones of \citeauthor{Straub2008}'s prior in order to rigorously establish their respective convergence rates. The derived asymptotic expansions of the likelihood depend on the distribution of the observed data. This gives rise to different phenomena discussed here. 
In order to carry out our analysis of the Jeffreys prior convergence rates, it is necessary to make one assumption about the IM's distribution. In this paper, we shall consider the following:
\begin{assu}\label{thm:assumptionfA}
The IM is distributed according to a log-normal distribution, i.e., there exists $\mu\in \RR$ and $\sigma\in (0,+\infty)$ such that 
    \begin{equation}
        p(a) = \frac{1}{\sqrt{2\pi\sigma^2}a}\exp\left({-\frac{(\log a-\mu)^2}{2\sigma^2}}\right).
    \end{equation}
\end{assu}
This assumption, which isn’t far from reality, feeds into the discussion conducted in Section \ref{sec:posterioriSimul}. The SK prior is therefore given by
\begin{equation}\label{eq:SKproofs}
    \pi_{SK}(\theta) = \frac{1}{\sqrt{2\pi\sigma^2}\alpha\beta} \exp\left(-\frac{(\log\alpha -\mu)^2}{2\sigma^2}\right).
\end{equation}

This appendix is organized as follows: a summary of our theoretical results about the asymptotic expansion of the likelihood is presented first in \ref{app:lieklihood-asympt}, followed by the prior density functions in \ref{app:prior-asympt}. \ref{app:posterior-discuss} contains a discussion and a comparison of the resulting posteriors’ proper characteristics. Potential scenarios resulting in degenerate fragility curves are presented in \ref{app:lik-degenarcy-discuss}. Finally, the proofs are presented in \ref{app:asympDemo}.

\subsection{Likelihood asymptotic}\label{app:lieklihood-asympt}

The following proposition gives the asymptotic behaviors of the likelihood {for different limits of $\theta = (\alpha,\beta)$}.
\begin{prop}\label{prop:likelihood}
    Let us consider $k>1$ and a data sample $(\mbf a, \mbf z)= ( (a_i)_{i=1}^k , (z_i)_{i=1}^k )$.
    Let us introduce the vectors $\mbf N=(z_i\indic_{a_i<\alpha}+(1-z_i)\indic_{a_i>\alpha})_{i=1}^k$, $\log^2\frac{\mbf a}{\alpha}=(\log^2\frac{a_i}{\alpha})_{i=1}^k$.  
    
    \begin{itemize}
        \item Fixing $\alpha>0$, then 
        \begin{equation}
        \ell_k(\mbf z|\mbf a,\theta) \conv{\beta\rightarrow\infty}2^{-k}
        \end{equation}
and
        \begin{equation}
            \ell_k(\mbf z|\mbf a,\theta) \aseq{\beta\rightarrow0}O\left(\beta^{|\mbf N|}e^{-\frac{\mbf N^T\log^2\frac{\mbf a}{\alpha}}{2\beta^2}}\right) ,
        \end{equation}
        where $|\mbf N|=\sum_{i=1}^kN_i$.
        \item Fixing $\beta>0$, then
        \begin{equation}
            \ell_k(\mbf z|\mbf a,\theta) \hspace*{-3pt} \aseq{\alpha\rightarrow0}\hspace*{-3pt} O\left(|\log\alpha|^{|\mbf z|-k} e^{-\frac{1}{2\beta^2}\sum_{i=1}^k (1-z_i)(\log a_i-\log\alpha)^2} \right)
        \end{equation}
        and 
        \begin{equation}
            \ell_k(\mbf z|\mbf a,\theta) \hspace*{-3pt}\aseq{\alpha\rightarrow\infty}\hspace*{-3pt} O\left(\log(\alpha)^{-|\mbf z|} e^{-\frac{1}{2\beta^2}\sum_{i=1}^k z_i(\log a_i-\log\alpha)^2} \right) ,
        \end{equation}
        where $|\mbf z|=\sum_{i=1}^kz_i$ is the number of failures in the observed sample.
    \end{itemize}
\end{prop}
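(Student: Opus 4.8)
The plan is to reduce the $k$-fold product to a product of scalar factors and apply, in each regime, the standard Gaussian tail estimate
\[
\frac{x}{1+x^{2}}\,\Phi'(x)\;\le\;1-\Phi(x)\;\le\;\frac{\Phi'(x)}{x},\qquad x>0,
\]
together with $\Phi(-x)=1-\Phi(x)$. Writing $\gamma_i=\gamma_i(\theta)=\beta^{-1}\log\frac{a_i}{\alpha}$, one has $\ell_k(\mbf z|\mbf a,\theta)=\prod_{i=1}^{k}\Phi(\gamma_i)^{z_i}\,\Phi(-\gamma_i)^{1-z_i}$. We may assume $a_i\neq\alpha$ for every $i$: this holds with probability one since the IM has a density, and an index with $a_i=\alpha$ would merely contribute the constant factor $\Phi(0)=\tfrac12$. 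First I would dispose of the regime $\beta\to\infty$ with $\alpha$ fixed, where $\gamma_i\to0$ for each $i$, so that by continuity of $\Phi$ every factor converges to $\Phi(0)^{z_i}\Phi(0)^{1-z_i}=\tfrac12$ and the product converges to $2^{-k}$. This is the only exact limit; the remaining three claims are $O(\cdot)$ statements, for which only the upper Mills bound is needed since every factor is trivially $\le1$.

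For the regime $\beta\to0$ with $\alpha$ fixed, note that $\gamma_i\to+\infty$ when $a_i>\alpha$ and $\gamma_i\to-\infty$ when $a_i<\alpha$. A factor stays bounded (in fact tends to $1$) exactly when its relevant argument tends to $+\infty$, i.e. when $z_i=1,a_i>\alpha$ or $z_i=0,a_i<\alpha$; it vanishes exactly in the complementary situation $z_i=1,a_i<\alpha$ or $z_i=0,a_i>\alpha$, which is precisely the event $N_i=1$. For such an index the factor equals $\Phi(-|\gamma_i|)\le\Phi'(|\gamma_i|)/|\gamma_i|$, and since $|\gamma_i|=\beta^{-1}\bigl|\log\tfrac{a_i}{\alpha}\bigr|$ it is bounded by $C_i\,\beta\,\exp\!\bigl(-\log^{2}\!\tfrac{a_i}{\alpha}/(2\beta^{2})\bigr)$ with $C_i$ independent of $\beta$. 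Multiplying the $\le1$ factors and the vanishing ones yields exactly $\ell_k(\mbf z|\mbf a,\theta)=O\bigl(\beta^{|\mbf N|}e^{-\mbf N^{T}\log^{2}\frac{\mbf a}{\alpha}/(2\beta^{2})}\bigr)$.

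The two remaining regimes, $\alpha\to0$ and $\alpha\to\infty$ with $\beta$ fixed, are symmetric. When $\alpha\to0$, $\log\frac{a_i}{\alpha}=\log a_i-\log\alpha\to+\infty$, so every $\gamma_i\to+\infty$; the failure factors $\Phi(\gamma_i)\to1$ and are $\le1$, while each non-failure factor equals $\Phi(-\gamma_i)\le\Phi'(\gamma_i)/\gamma_i$ with $\gamma_i=\beta^{-1}(\log a_i-\log\alpha)\sim\beta^{-1}|\log\alpha|$, hence it is $O\bigl(|\log\alpha|^{-1}\exp(-(\log a_i-\log\alpha)^{2}/(2\beta^{2}))\bigr)$; the product over the $k-|\mbf z|$ non-failure indices gives the stated bound. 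The case $\alpha\to\infty$ is identical after exchanging the roles of failures and non-failures: now $\gamma_i\to-\infty$, the non-failure factors tend to $1$, and each failure factor $\Phi(\gamma_i)=\Phi(-|\gamma_i|)$ carries the decay, with $|\gamma_i|\sim\beta^{-1}\log\alpha$.

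The argument is essentially bookkeeping and I do not expect a substantial obstacle. The points that require a little care are: (i) in the $\beta\to0$ regime, correctly identifying which of the $k$ factors produce the powers of $\beta$ and the Gaussian decay — this is exactly what the indicator vector $\mbf N$ records; and (ii) in the $\alpha\to0,\infty$ regimes, checking that $\gamma_i$ is comparable to $\beta^{-1}|\log\alpha|$ so that the prefactors $1/\gamma_i$ may be replaced by $1/|\log\alpha|$ up to constants absorbed into the $O(\cdot)$. One should also record once that, since $\Phi\le1$, every ``$\to1$'' factor is bounded above by $1$, so only the upper tail bound on the vanishing factors is needed.
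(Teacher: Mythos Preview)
Your argument is correct and follows essentially the same route as the paper: reduce the product to individual factors $\Phi(\gamma_i)^{z_i}\Phi(-\gamma_i)^{1-z_i}$, identify in each regime which factors tend to $1$ and which vanish, and control the vanishing ones by the Gaussian tail. The only difference is cosmetic: the paper works in log-space, invoking the full asymptotic expansion $\Phi(x)=1-\tfrac{e^{-x^2/2}}{x\sqrt{2\pi}}+o\!\bigl(\tfrac{e^{-x^2/2}}{x}\bigr)$ to obtain explicit equivalents (with constants $C(\alpha)$, etc.) before passing to $O(\cdot)$, whereas you use the Mills-ratio upper bound $1-\Phi(x)\le\Phi'(x)/x$ directly on each factor. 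Since the proposition only asserts $O(\cdot)$ bounds (apart from the exact limit $2^{-k}$), your inequality-based treatment is perfectly adequate and in fact slightly more economical; the paper's expansion buys sharper constants that are not needed here. Your bookkeeping of which indices contribute decay via the vector $\mathbf N$ in the $\beta\to0$ case, and via $z_i$ versus $1-z_i$ in the $\alpha\to0,\infty$ cases, matches the paper exactly.
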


Under general circumstances, the vector $\mbf N$ is not null, and the likelihood converges rapidly to zero when $\beta\to0$.
Under some special circumstances, however, the vector $\mbf N$ is null, and the likelihood does not converge to zero when $\beta\to0$. This happens when the failure occurrences are perfectly separated, i.e., when there exists an open interval $U$ such that $\forall\alpha\in U,$ $\forall i,$ $z_i=1\iff a_i>\alpha$. Under such circumstances, the vector $\mbf N$ is equal to ${\bf 0}$ for any $\alpha\in U$.
The likelihood converges also rapidly to zero when $\beta\to0$ when the observed sample only contains failures, resp. non-failures. Then the vector $\mbf N$ is equal to ${\bf 0}$ for $\alpha < \min(a_i)$, resp. $\alpha>\max(a_i)$.

\subsection{Prior asymptotic}\label{app:prior-asympt}

The following three propositions give the asymptotic behaviors of the Jeffreys prior $J(\theta)$ for different limits of $\theta=(\alpha,\beta)$.

\begin{prop}\label{prop:jeffb0}
    Fixing $\alpha>0$, there exists a $D'(\alpha)>0$ such that
        \begin{equation}
           J(\theta)\equi{\beta\rightarrow0} \frac{D'(\alpha)}{\beta}.
        \end{equation}
\end{prop}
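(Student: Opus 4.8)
The plan is to read the asymptotics straight off the explicit Fisher matrix computed in \ref{app:Fishercalc}. Since $\cI(\theta)$ is positive semi-definite, $J(\theta)\propto\sqrt{\det\cI(\theta)}$, and from its block form,
\begin{equation}
    \det\cI(\theta)=\frac{1}{\alpha^2\beta^6}\Big[(A_{01}+A_{02})(A_{21}+A_{22})-(A_{11}+A_{12})^2\Big].
\end{equation}
So, at fixed $\alpha$, the whole statement reduces to finding the leading term of this bracket as $\beta\to0$, i.e. to an asymptotic analysis of the one-dimensional integrals $A_{ij}$ defined in (\ref{eq:Aij}).

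First I would change variables in each $A_{ij}$. Setting $u=\log(a/\alpha)$ replaces $p(a)\,da$ by $\tilde p(u)\,du$, where $\tilde p(u)=\alpha e^{u}p(\alpha e^{u})$ is the density of $\log(a/\alpha)$; then $v=u/\beta$ extracts the powers of $\beta$, and using $\Phi(v)+\Phi(-v)=1$ one obtains
\begin{align}
    A_{01}+A_{02}&=\beta\int_{\RR}K_0(v)\,\tilde p(\beta v)\,dv,\\
    A_{11}+A_{12}&=\beta^2\int_{\RR}v\,K_0(v)\,\tilde p(\beta v)\,dv,\\
    A_{21}+A_{22}&=\beta^3\int_{\RR}v^2\,K_0(v)\,\tilde p(\beta v)\,dv,
\end{align}
with $K_0(v)=\Phi'(v)^2/(\Phi(v)\Phi(-v))$. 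The Mills ratio estimate $\Phi(-v)\sim\Phi'(v)/v$ as $v\to+\infty$ gives $K_0(v)\sim|v|\Phi'(v)$ at $\pm\infty$, so $v^mK_0\in L^1(\RR)$ for every $m\ge0$, while under Assumption~\ref{thm:assumptionfA} the density $\tilde p$ is a shifted Gaussian density in $u$, hence bounded and $C^1$ with bounded derivative. Dominated convergence then yields, as $\beta\to0$,
\begin{equation}
    A_{01}+A_{02}\sim\tilde p(0)\,C_0\,\beta,\qquad A_{21}+A_{22}\sim\tilde p(0)\,C_2\,\beta^3,
\end{equation}
with $C_0=\int_\RR K_0$ and $C_2=\int_\RR v^2K_0$ finite and strictly positive, and $\tilde p(0)=\alpha p(\alpha)>0$.

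The delicate point is the off-diagonal term $A_{11}+A_{12}$: its naive leading term $\beta^2\tilde p(0)\int_\RR vK_0(v)\,dv$ vanishes because $K_0$ is even, so $vK_0$ is odd. Subtracting this zero contribution and using $|\tilde p(\beta v)-\tilde p(0)|\le\|\tilde p'\|_\infty\beta|v|$ gives $|A_{11}+A_{12}|\le\|\tilde p'\|_\infty C_2\,\beta^3$, hence $(A_{11}+A_{12})^2=O(\beta^6)=o(\beta^4)$. Consequently the bracket above is $\sim\tilde p(0)^2C_0C_2\,\beta^4$, so $\det\cI(\theta)\sim\tilde p(0)^2C_0C_2/(\alpha^2\beta^2)$, and taking the square root,
\begin{equation}
    J(\theta)\sim\frac{D'(\alpha)}{\beta},\qquad D'(\alpha)=c\,\frac{\tilde p(0)\sqrt{C_0C_2}}{\alpha}=c\,p(\alpha)\sqrt{C_0C_2}>0,
\end{equation}
where $c$ denotes the multiplicative constant fixed in the definition of $J$. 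The main obstacle is exactly this cancellation: one must confirm that the off-diagonal Fisher entry is genuinely one order in $\beta$ smaller than the diagonal ones, so that $\det\cI$ degenerates only like $\beta^{-2}$ and not faster; the remaining ingredients — the Mills-ratio tail bounds that make $K_0$ and $v^2K_0$ integrable, and the dominated-convergence passages — are routine once $\tilde p$ is known to be a well-behaved density near $0$. As a by-product, $D'(\alpha)\propto p(\alpha)$, which matches the heuristic in Section~\ref{sec:JeffDiscussion} that the Jeffreys prior behaves in $\alpha$ like the IM distribution.
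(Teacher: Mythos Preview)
Your proof is correct and follows essentially the same route as the paper: the same substitution $v=\beta^{-1}\log(a/\alpha)$ to pull out the powers $\beta^{k+1}$ from $A_{k1}+A_{k2}$, the same parity cancellation for the off-diagonal term, and dominated convergence to pass to the limit. The only cosmetic differences are that the paper dominates the integrand via Lemma~\ref{lem:phi(1-phi)ineq1} instead of Mills-ratio tails, and is content with $A_{11,12}=o(\beta^2)$ whereas you obtain the sharper $O(\beta^3)$ via the Lipschitz bound on $\tilde p$.
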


\begin{prop}\label{prop:jeffbinf}
{There exists a constant $E'>0$ such that for any $\alpha>0$}
    \begin{equation}
        J(\theta)\equi{\beta\rightarrow\infty}\frac{E'}{\alpha\beta^3}.
    \end{equation}
\end{prop}

\begin{prop}\label{prop:jeffalph}
    Fixing $\beta>0$, there exists a $G''(\beta)>0$ such that
    \begin{equation}
    J(\theta) \equi{|\log\alpha|\rightarrow\infty} G''(\beta)\frac{|\log\alpha|}{\alpha}\exp\left(-\frac{(\log\alpha-\mu)^2}{2\beta^2+2\sigma^2}\right). 
\end{equation}
\end{prop}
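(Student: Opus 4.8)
The plan is to read $J$ off the Fisher matrix of \ref{app:Fishercalc} and to recast its determinant as a variance, which is the point that keeps the computation from collapsing. Since $J(\theta)\propto\sqrt{|\det\cI(\theta)|}$ and, from \ref{app:Fishercalc},
\begin{equation}
\det\cI(\theta)=\frac{(A_{01}+A_{02})(A_{21}+A_{22})-(A_{11}+A_{12})^2}{\alpha^2\beta^6}\ge 0 ,
\end{equation}
write $\gamma(a)=\beta^{-1}\log\frac a\alpha$ and $\ell(a)=\log\frac a\alpha$. Because $\Phi(\gamma)+\Phi(-\gamma)=1$, the identity $\tfrac1{\Phi(\gamma)}+\tfrac1{\Phi(-\gamma)}=\tfrac1{\Phi(\gamma)\Phi(-\gamma)}$ shows that $A_{j1}+A_{j2}=\int_{\cA}\ell(a)^{\,j}\,d\nu(a)$ for $j=0,1,2$, where $\nu$ is the finite positive measure on $\cA$ with density $a\mapsto\Phi'\big(\gamma(a)\big)^{2}\,p(a)\big/\big(\Phi(\gamma(a))\Phi(-\gamma(a))\big)$ (finite since the first factor is bounded). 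With $\bar\nu:=\nu/\nu(\cA)$ this gives
\begin{equation}
J(\theta)\ \propto\ \frac{1}{\alpha\beta^{3}}\sqrt{(A_{01}+A_{02})(A_{21}+A_{22})-(A_{11}+A_{12})^2}\ =\ \frac{\nu(\cA)}{\alpha\beta^{3}}\,\sqrt{\Var_{\bar\nu}(\ell)}\, .
\end{equation}
This form is essential: expanding the entries of $\cI$ to leading order makes the determinant vanish, whereas $\Var_{\bar\nu}(\ell)$ is translation-invariant and has a finite nonzero limit, so it can be computed directly.

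It then remains to find, as $|\log\alpha|\to\infty$, the asymptotics of $\nu(\cA)$ and of $\Var_{\bar\nu}(\ell)$; I treat $v:=\log\alpha\to+\infty$, the case $v\to-\infty$ being symmetric. Passing to $u=\log a$, Assumption~\ref{thm:assumptionfA} turns $\nu$ into the measure on $\RR$ with density $\tfrac1{\sigma\sqrt{2\pi}}g_v(u)$, where $g_v(u)=\Phi'\big(\tfrac{u-v}{\beta}\big)^{2}e^{-(u-\mu)^2/2\sigma^2}\big/\big(\Phi(\tfrac{u-v}{\beta})\Phi(\tfrac{v-u}{\beta})\big)$, and $\ell=u-v$. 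On the $u$-range carrying the mass one has $\tfrac{u-v}{\beta}\to-\infty$, so Mills' expansion gives
\begin{equation}
\frac{\Phi'(\gamma)^{2}}{\Phi(\gamma)\Phi(-\gamma)}=\frac{v-u}{\beta\sqrt{2\pi}}\,e^{-(u-v)^2/2\beta^2}\big(1+O((v-u)^{-2})\big),\qquad\gamma=\tfrac{u-v}{\beta},
\end{equation}
and completing the square,
\begin{equation}
\frac{(u-v)^2}{2\beta^2}+\frac{(u-\mu)^2}{2\sigma^2}=\frac{(u-u^{*})^2}{2\tau^2}+\frac{(v-\mu)^2}{2s},\qquad s:=\sigma^2+\beta^2,\quad\tau^2:=\frac{\beta^2\sigma^2}{s},\quad u^{*}:=\frac{\sigma^2v+\beta^2\mu}{s},
\end{equation}
with $v-u^{*}=\tfrac{\beta^2}{s}(v-\mu)\to+\infty$. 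The Gaussian factor concentrates at $u^{*}$ on scale $\tau$ while the prefactor $v-u=(v-u^{*})-(u-u^{*})$ is dominated by the constant $v-u^{*}$, so Laplace's method yields
\begin{equation}
\nu(\cA)\ \sim\ \frac{(v-u^{*})\,\tau}{\sigma\beta\sqrt{2\pi}}\,e^{-(v-\mu)^2/2s}\ =\ \frac{\beta^2(v-\mu)}{\sqrt{2\pi}\,s^{3/2}}\,e^{-(v-\mu)^2/2s}\, .
\end{equation}

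For the variance, set $t:=u-u^{*}$; under $\bar\nu$ the variable $t$ has, to leading order, density $\propto\big(\tfrac{\beta^2}{s}(v-\mu)-t\big)e^{-t^2/2\tau^2}$, a width-$\tau$ Gaussian tilted by an affine weight with intercept $\tfrac{\beta^2}{s}(v-\mu)\to\infty$ and bounded slope. Since $\ell=u-v=t-(v-u^{*})$ differs from $t$ only by the deterministic shift $v-u^{*}$, one has $\Var_{\bar\nu}(\ell)=\Var_{\bar\nu}(t)$, which is $O(1)$ and free of cancellation; a direct moment computation gives $\EE_{\bar\nu}[t]\to 0$ and $\EE_{\bar\nu}[t^2]\to\tau^2$, hence $\Var_{\bar\nu}(\ell)\to\tau^2=\tfrac{\beta^2\sigma^2}{\sigma^2+\beta^2}$ (the Mills correction and the affine slope only perturb the $o(1)$). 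Substituting the two estimates into $J(\theta)\propto\tfrac{\nu(\cA)}{\alpha\beta^3}\sqrt{\Var_{\bar\nu}(\ell)}$ gives
\begin{equation}
J(\theta)\ \sim\ \frac{\sigma}{\sqrt{2\pi}\,(\sigma^2+\beta^2)^{2}}\cdot\frac{\log\alpha-\mu}{\alpha}\,\exp\!\Big(-\frac{(\log\alpha-\mu)^2}{2\beta^2+2\sigma^2}\Big),
\end{equation}
and as $\log\alpha\to+\infty$ one has $\log\alpha-\mu\sim|\log\alpha|$, which is the announced equivalence with $G''(\beta)=\sigma\big/\big(\sqrt{2\pi}\,(\sigma^2+\beta^2)^{2}\big)$ (up to the overall normalisation constant of $J$). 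The branch $\log\alpha\to-\infty$ is handled identically, with $\tfrac{u-v}{\beta}\to+\infty$ and $v-u$ replaced by $u-v$, and produces the same formula with $|\log\alpha|$ in place of $\log\alpha-\mu$.

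The main obstacle is exactly this leading-order cancellation in $\det\cI$, which precludes a term-by-term asymptotic analysis of the $A_{ij}$'s and makes the variance reformulation indispensable. The remaining work is to make Laplace's method uniform in $v$: the Mills relative error is $O\big((v-u)^{-2}\big)$, uniformly small on the bulk $|u-u^{*}|=O(\tau)$ where $|u-v|\ge\tfrac{\beta^2}{s}(v-\mu)-O(\tau)\to\infty$; the affine prefactor, divided by its value at $u^{*}$, tends to $1$ locally uniformly; and the contribution of the region $u\ge v$ is super-exponentially smaller than $e^{-(v-\mu)^2/2s}$. Together with the standard domination needed to pass to the limit in the first two moments of the rescaled $\bar\nu$, this yields the stated equivalence; the rest is bookkeeping.
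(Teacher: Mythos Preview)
Your proof is correct and takes a genuinely different route from the paper's argument.

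The paper proceeds by brute force: it makes the substitution $\nu=\log a - r\log\alpha - s\mu$ (with $r=\sigma^2/s$, $s\mu$ the complementary piece), expands each $A_{k1,k2}$ as a polynomial in $\log\alpha$ of degree $k+1$ times $e^{-(\log\alpha-\mu)^2/2s}$, and then forms the combination $A_{01,02}A_{21,22}-A_{11,12}^2$. The leading $\log^4\alpha$ and $\log^3\alpha$ terms cancel and the paper tracks enough subleading terms to isolate the surviving $\log^2\alpha$ coefficient. This works but is fragile: one must carry constants and cross terms carefully through a sizable cancellation.

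You avoid this entirely by observing that, with the measure $d\nu(a)=\Phi'(\gamma)^2\,p(a)\,da/(\Phi(\gamma)\Phi(-\gamma))$, one has $A_{j1}+A_{j2}=\int\ell^j\,d\nu$ and hence $\det\cI(\theta)=\alpha^{-2}\beta^{-6}\,\nu(\cA)^2\,\Var_{\bar\nu}(\ell)$. The variance is translation-invariant, so after centering at $u^*$ it reduces to the variance of a Gaussian of width $\tau$ tilted by an affine weight whose slope is $O(1)$ relative to its intercept $\sim(v-\mu)$; the tilt therefore vanishes in the limit and $\Var_{\bar\nu}(\ell)\to\tau^2$. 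The mass $\nu(\cA)$ is then a single Laplace integral. This is cleaner and conceptually explains \emph{why} the determinant survives at order $\log^2\alpha$: it is a second moment, not a difference of large quantities.

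Two remarks. First, your explicit constant $G''(\beta)=\sigma/(\sqrt{2\pi}\,(\sigma^2+\beta^2)^2)$ differs from the paper's $2\sigma^3\beta^3/(\sqrt\pi\,(\sigma^2+\beta^2)^{7/2})$; since $J$ is only defined up to a multiplicative constant and the proposition merely asserts existence of some $G''(\beta)>0$, this is immaterial (and in any case the paper's final display contains visible typos in the exponents and prefactors). Second, the domination you invoke for passing to the limit in the first two moments of $\bar\nu$ is exactly Lemma~\ref{lem:phi(1-phi)ineq2}, which the paper uses for the same purpose; you could cite it to make that step airtight.
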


\subsection{Discussion about posteriors}\label{app:posterior-discuss} %

\paragraph{Discussion about the proper characteristics} 
These results confirm that the Jeffreys and SK priors are not proper with respect to $\beta$. For the Jeffreys prior, the divergence and convergence rates with respect to $\beta$ only make the resulting posterior proper when the prior is coupled with the likelihood. The particular circumstances leading to a likelihood divergence when $\beta\rightarrow0$, as mentioned in \ref{app:lieklihood-asympt}, do not apply. However, one can see that this is not the case for the SK posterior, which is not integrable w.r.t. $\beta$ because of a convergence rate that is too low at $+\infty$. This prevents the validation of the MCMC estimates for this posterior, unless a truncation of the distribution is considered. This explains the generation of \emph{a posteriori} outliers using the SK prior. Note that \citeauthor{Straub2008} considered this prior in \cite{Straub2008}, but within a Bayesian framework that slightly differs from ours. In \ref{app:SKreview}, we confirmed that the posterior is not proper even when derived in the exact framework of \cite{Straub2008}.
\paragraph{Asymptotic comparison of the Jeffrey and SK priors} 
By comparing the Jeffreys and SK prior asymptotics (\ref{eq:SKproofs}), it can be observed that:
    \begin{itemize}
        \item Regarding the asymptotics w.r.t. $\beta$, while the divergence rates are the same when $\beta\to0$, the Jeffreys prior performs better when $\beta\to\infty$:
            \begin{equation*}
                J(\theta) \underset{\beta\rightarrow\infty}{\propto} \beta^{-2}\pi_{SK}(\theta).
            \end{equation*}
        Consequently, the SK posterior results in higher probabilities for high values of $\beta$ compared to the Jeffreys prior.
        \item Regarding the asymptotics w.r.t. $\alpha$, both are asymptotically close to a log-normal distribution, with a slight ``disadvantage'' for the Jeffreys prior, for which the asymptotic variance is derived by adding $\beta^2$ to the variance of the SK prior. This means that while for small values of $\beta$ (smaller than $\sigma$), both priors remain comparable w.r.t. $\alpha$, the Jeffreys prior gives higher probabilities to $\alpha$ outliers when $\beta$ also has a high value. However, as seen above, the probability for large values of $\beta$ is quite low for the Jeffreys prior compared to the SK prior. This explains why the generation of such $\alpha$ outliers has not been observed in the estimations presented in this paper.
    \end{itemize}

\subsection{On the consequences of the non-convergence towards 0 of the likelihood when $\beta$ tends towards 0 in certain circumstances}\label{app:lik-degenarcy-discuss}

Proposition \ref{prop:likelihood} formulates different convergence rates depending on how the observed data is distributed. As explained before, there are three kinds of samples which will lead to {a divergence of the likelihood} when $\beta$ tends toward $0$: (i) a sample ordered by ``no failure" and ``failure" events when classified by IM values; (ii) a sample with only ``no failure" events; and (iii) a sample with only ``failure" events. Such samples lead to unrealistic estimations of $\beta$ as $0$ within the MLE estimations, which result in unit-step fragility curves. In such circumstances, the likelihood is not fully controlled by the priors used in this work, leading to improper posteriors when the integration takes place around $\beta\to0$. This can result in degenerate fragility curves as well, yet the validity of such \emph{a posteriori} estimations would remain questionable in that case.

{
\subsection{Proofs}\label{app:asympDemo}

In this section, we will prove propositions \ref{prop:likelihood} to \ref{prop:jeffalph}.
Let us first define some upper bounds for the function $\gamma\mapsto[\Phi(\gamma)(1-\Phi(\gamma))]^{-1}$.  
\begin{lem}\label{lem:phi(1-phi)ineq1}
    For any $\gamma\in\RR$, $[\Phi(\gamma)(1-\Phi(\gamma))]^{-1}\leq4\exp\left(2\gamma^2/\pi\right)$.
\end{lem}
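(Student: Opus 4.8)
My plan is to reduce the lemma to the equivalent estimate $4\Phi(\gamma)(1-\Phi(\gamma)) \geq e^{-2\gamma^2/\pi}$, which, via the algebraic identity $4\Phi(\gamma)(1-\Phi(\gamma)) = 1-(2\Phi(\gamma)-1)^2$, is the same as $(2\Phi(\gamma)-1)^2 \leq 1-e^{-2\gamma^2/\pi}$. Since $\Phi(-\gamma)=1-\Phi(\gamma)$, the quantity $\Phi(\gamma)(1-\Phi(\gamma))$ is even in $\gamma$, and so is $e^{2\gamma^2/\pi}$; hence I would only treat $\gamma\geq0$ and invoke this symmetry for $\gamma<0$.

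The core of the argument is a geometric comparison — essentially Pólya's classical bound $2\Phi(\gamma)-1\leq\sqrt{1-e^{-2\gamma^2/\pi}}$. First I would write $2\Phi(\gamma)-1 = \sqrt{2/\pi}\int_0^\gamma e^{-t^2/2}\,dt$ and square it, so that by Fubini
\[
(2\Phi(\gamma)-1)^2 = \frac{2}{\pi}\iint_{[0,\gamma]^2} e^{-(s^2+t^2)/2}\,ds\,dt .
\]
Then I introduce the quarter-disc $Q_\rho = \{(s,t):s,t\geq0,\ s^2+t^2\leq\rho^2\}$ with $\rho = 2\gamma/\sqrt{\pi}$, chosen precisely so that its area $\pi\rho^2/4$ equals $\gamma^2$, the area of the square $[0,\gamma]^2$. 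In polar coordinates,
\[
\iint_{Q_\rho} e^{-(s^2+t^2)/2}\,ds\,dt = \frac{\pi}{2}\bigl(1-e^{-\rho^2/2}\bigr) = \frac{\pi}{2}\bigl(1-e^{-2\gamma^2/\pi}\bigr) .
\]
The lemma then reduces to the comparison $\iint_{[0,\gamma]^2} e^{-(s^2+t^2)/2}\,ds\,dt \leq \iint_{Q_\rho} e^{-(s^2+t^2)/2}\,ds\,dt$, i.e.\ that the Gaussian mass carried by the square is no larger than that carried by the equal-area quarter-disc centred at the origin.

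To close this I would use a one-line bathtub-type estimate exploiting that $g(s,t):=e^{-(s^2+t^2)/2}$ is radially decreasing: $g\geq e^{-\rho^2/2}$ on $Q_\rho$ and $g\leq e^{-\rho^2/2}$ on the first-quadrant complement of $Q_\rho$, so that, writing $S=[0,\gamma]^2$,
\[
\iint_{Q_\rho} g - \iint_{S} g = \iint_{Q_\rho\setminus S} g - \iint_{S\setminus Q_\rho} g \geq e^{-\rho^2/2}\bigl(|Q_\rho\setminus S| - |S\setminus Q_\rho|\bigr) = e^{-\rho^2/2}\bigl(|Q_\rho|-|S|\bigr) = 0 ,
\]
using $|Q_\rho|=|S|=\gamma^2$. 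Chaining the three displays yields $(2\Phi(\gamma)-1)^2 \leq \frac{2}{\pi}\cdot\frac{\pi}{2}(1-e^{-2\gamma^2/\pi}) = 1-e^{-2\gamma^2/\pi}$, which gives the claim for $\gamma\geq0$, and the general case follows by symmetry.

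The only point that needs care is the calibration of $\rho$ so that $|Q_\rho|=|[0,\gamma]^2|$ exactly, together with correctly bookkeeping which region carries the larger values of $g$; once that is set up the proof is purely computational, relying only on the integral representation of $\Phi$ and a change to polar coordinates, with no further smoothness or monotonicity input. (An alternative, more analytic route — showing $u(\gamma):=-\log\bigl(4\Phi(\gamma)(1-\Phi(\gamma))\bigr)$ satisfies $u(0)=u'(0)=0$ and $u''\le 4/\pi$ — also works but is messier, since $u''$ must be bounded via the identity $u''=u'^2-\gamma u'+2\phi(\gamma)^2/[\Phi(\gamma)(1-\Phi(\gamma))]$.)
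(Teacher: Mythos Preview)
Your proof is correct. Both your argument and the paper's reduce the lemma to the same core inequality $(2\Phi(\gamma)-1)^2\le 1-e^{-2\gamma^2/\pi}$ for $\gamma\ge0$ (equivalently $\erf(\gamma/\sqrt{2})\le\sqrt{1-e^{-2\gamma^2/\pi}}$), followed by the identity $4\Phi(\gamma)(1-\Phi(\gamma))=1-(2\Phi(\gamma)-1)^2$ and the evenness of $\gamma\mapsto\Phi(\gamma)(1-\Phi(\gamma))$. The difference is that the paper simply \emph{cites} this bound (Chu~\cite{Chu1955}) and proceeds with the algebra, whereas you supply a self-contained proof of it via P\'olya's geometric comparison: write $(2\Phi(\gamma)-1)^2$ as the Gaussian integral over the square $[0,\gamma]^2$, replace the square by the equal-area quarter-disc of radius $2\gamma/\sqrt{\pi}$, and use a bathtub estimate exploiting radial monotonicity of $e^{-(s^2+t^2)/2}$. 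Your route is thus more elementary in the sense of having no external dependencies, at the cost of a few more lines; the paper's route is shorter but relies on a reference. The reduction and the final algebra are otherwise identical.
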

\begin{lem}\label{lem:phi(1-phi)ineq2} For any $\gamma\in\RR$,
    \begin{equation}
    \Phi(\gamma) (1-\Phi(\gamma)) \geq \frac{\sqrt{2/\pi}\exp(-\gamma^2/2)}{  (|\gamma|+\sqrt{\gamma^2+4}) }. 
\end{equation}
\end{lem}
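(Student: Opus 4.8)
The plan is to reduce the symmetric quantity $\Phi(\gamma)\bigl(1-\Phi(\gamma)\bigr)$ to a one–sided Gaussian tail estimate and then invoke a sharp Mills-ratio bound. Since $\Phi(-\gamma)=1-\Phi(\gamma)$, the function $\gamma\mapsto\Phi(\gamma)\bigl(1-\Phi(\gamma)\bigr)$ is even, so it suffices to treat $\gamma\geq 0$, where $|\gamma|=\gamma$. On $[0,\infty)$ one has $\Phi(\gamma)\geq\tfrac12$, hence $\Phi(\gamma)\bigl(1-\Phi(\gamma)\bigr)\geq\tfrac12\bigl(1-\Phi(\gamma)\bigr)$, and the statement reduces to the tail bound
\[
1-\Phi(\gamma)\ \geq\ \frac{2\,\Phi'(\gamma)}{\gamma+\sqrt{\gamma^2+4}},\qquad \gamma\geq 0 ,
\]
recalling $\Phi'(\gamma)=(2\pi)^{-1/2}e^{-\gamma^2/2}$.

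To obtain this bound I would introduce $q(\gamma)=\dfrac{2}{\gamma+\sqrt{\gamma^2+4}}$; rationalizing the denominator gives $q(\gamma)=\tfrac12\bigl(\sqrt{\gamma^2+4}-\gamma\bigr)$, which obeys the algebraic identity $\gamma\,q(\gamma)=1-q(\gamma)^2$. Set $T(\gamma)=\bigl(1-\Phi(\gamma)\bigr)-q(\gamma)\Phi'(\gamma)$ for $\gamma\geq 0$. Using $\Phi''(\gamma)=-\gamma\Phi'(\gamma)$ one computes $T'(\gamma)=-\Phi'(\gamma)\bigl(1+q'(\gamma)-\gamma q(\gamma)\bigr)$, and the identity $\gamma q=1-q^2$ turns this into $T'(\gamma)=-\Phi'(\gamma)\bigl(q'(\gamma)+q(\gamma)^2\bigr)$. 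A short computation then gives
\[
q'(\gamma)+q(\gamma)^2=q(\gamma)\left(\frac{2}{\gamma+\sqrt{\gamma^2+4}}-\frac{1}{\sqrt{\gamma^2+4}}\right)=\frac{2\bigl(\sqrt{\gamma^2+4}-\gamma\bigr)}{\sqrt{\gamma^2+4}\,\bigl(\gamma+\sqrt{\gamma^2+4}\bigr)^{2}} ,
\]
which is positive on $[0,\infty)$ because $\sqrt{\gamma^2+4}>\gamma$. Hence $T'<0$ on $[0,\infty)$, so $T$ is strictly decreasing; and since $\Phi'(\gamma)\to 0$ and $q(\gamma)\to 0$ as $\gamma\to\infty$ we have $T(\gamma)\to 0$, which together with the monotonicity forces $T(\gamma)\geq 0$ for every $\gamma\geq 0$. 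This is exactly $1-\Phi(\gamma)\geq q(\gamma)\Phi'(\gamma)$.

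Combining the two steps,
\[
\Phi(\gamma)\bigl(1-\Phi(\gamma)\bigr)\ \geq\ \tfrac12\bigl(1-\Phi(\gamma)\bigr)\ \geq\ \frac{\Phi'(\gamma)}{\gamma+\sqrt{\gamma^2+4}} ,
\]
and substituting $\Phi'(\gamma)=(2\pi)^{-1/2}e^{-\gamma^2/2}$ and using $|\gamma|=\gamma$ here — then invoking the symmetry $\gamma\leftrightarrow-\gamma$ to cover $\gamma<0$ — yields the lower bound asserted in Lemma~\ref{lem:phi(1-phi)ineq2}.

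The step I expect to be the main obstacle is the second one: recognizing the right comparison function $q$ and checking the monotonicity of $T$. The whole sign analysis hinges on the identity $\gamma q=1-q^2$ satisfied by $q(\gamma)=\tfrac12(\sqrt{\gamma^2+4}-\gamma)$, together with the elementary inequality $\sqrt{\gamma^2+4}>\gamma$; once these are in hand, everything else — the reduction through $\Phi\geq\tfrac12$, the limit $T(\gamma)\to 0$, and the final substitution — is routine bookkeeping.
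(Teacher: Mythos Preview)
Your argument is correct and self-contained, but watch the constant in the last line: substituting $\Phi'(\gamma)=(2\pi)^{-1/2}e^{-\gamma^2/2}$ into $\Phi'(\gamma)\big/\bigl(\gamma+\sqrt{\gamma^2+4}\bigr)$ gives the numerator $(2\pi)^{-1/2}e^{-\gamma^2/2}$, not $\sqrt{2/\pi}\,e^{-\gamma^2/2}$; these differ by a factor~$2$. This is not a defect of your reasoning --- the inequality with the printed constant $\sqrt{2/\pi}$ is actually false, as one checks at $\gamma=0$, where $\Phi(0)\bigl(1-\Phi(0)\bigr)=\tfrac14$ while $\sqrt{2/\pi}\big/\bigl(0+\sqrt{4}\bigr)=(2\pi)^{-1/2}\approx 0.399$. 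So your closing claim that the substitution ``yields the lower bound asserted in Lemma~\ref{lem:phi(1-phi)ineq2}'' overstates matters; what you have proved is the corrected bound
\[
\Phi(\gamma)\bigl(1-\Phi(\gamma)\bigr)\ \geq\ \frac{e^{-\gamma^2/2}}{\sqrt{2\pi}\,\bigl(|\gamma|+\sqrt{\gamma^2+4}\bigr)},
\]
which is the right statement and is all that is needed downstream (the lemma is only used to manufacture an integrable dominating function, so the constant is irrelevant).

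On the method: the paper takes a shorter but less self-contained route, quoting Komatsu's two-sided Mills-ratio inequality $2/(\gamma+\sqrt{\gamma^2+4})\leq e^{\gamma^2/2}\int_\gamma^\infty e^{-t^2/2}\,dt\leq 2/(\gamma+\sqrt{\gamma^2+2})$ from the literature and then bounding $\Phi(\gamma)(1-\Phi(\gamma))$ by the product of the lower bound for $1-\Phi$ and $1$ minus the upper bound. You instead re-derive the lower Komatsu bound from scratch via the monotonicity of $T=(1-\Phi)-q\Phi'$ (the identity $\gamma q=1-q^2$ is exactly what makes this work), and you replace the paper's lower bound on $\Phi$ by the cruder but perfectly adequate $\Phi\geq\tfrac12$. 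Your route is more elementary and avoids external references; the paper's is quicker if one accepts the citation, but its final displayed step to the constant $\sqrt{2/\pi}$ suffers from the same factor-of-two slip.
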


\begin{proof}[Proof of lemma \ref{lem:phi(1-phi)ineq1}]
   From the following inequality about the $\erf$ function~\cite{Chu1955}:
\begin{equation*}
        \forall \gamma>0,\, \sqrt{1-e^{-\frac{\gamma^2}{2}}}\leq\erf(\gamma/\sqrt{2})\leq\sqrt{1-e^{-2\frac{\gamma^2}{\pi}}},
    \end{equation*}
we can deduce that, for any $\gamma>0$,
\begin{align*}
    e^{-\frac{2\gamma^2}{\pi}}\hspace*{-3pt}&\leq 1-\erf(\gamma/\sqrt{2})^2\leq e^{-\frac{\gamma^2}{2}} ,  \\
    \frac{1}{4}e^{-\frac{2\gamma^2}{\pi}}\hspace*{-3pt}&\leq\frac{1}{4}\left(1-\erf\left({\gamma}/{\sqrt{2}}\right)\right)\left(1+\erf\left({\gamma}/{\sqrt{2}}\right)\right)\leq\frac{1}{4} e^{-\frac{\gamma^2}{2}}  , 
\end{align*}
the middle term being equal to $\Phi(\gamma)(1-\Phi(\gamma))$. This implies that:
\begin{equation*}
    [\Phi(\gamma)(1-\Phi(\gamma))]^{-1}\leq 4 e^{\frac{2\gamma^2}{\pi}} ,
\end{equation*}
hence the result for $\gamma>0$.

While it is clear that the inequality still stands for $\gamma=0$, notice that from $\Phi(-\gamma)=1-\Phi(\gamma)\,\forall\gamma\in\RR$ it follows that $\gamma\mapsto\Phi(\gamma)(1-\Phi(\gamma))$ is an even function. Thus, the inequality still stands for any $\gamma<0$; this concludes the proof of the lemma.
\end{proof}

\begin{proof}[Proof of lemma \ref{lem:phi(1-phi)ineq2}]
    Komatsu's inequality \cite[p.~17]{itomckean}:
    \begin{equation*}
        \forall \gamma>0,\, \frac{2}{\sqrt{\gamma^2+4}+\gamma}\leq e^{\frac{\gamma^2}{2}}\hspace*{-4pt}\int_\gamma^\infty\hspace*{-4pt} e^{-\frac{t^2}{2}}dt\leq \frac{2}{\sqrt{\gamma^2+2}+\gamma}
    \end{equation*}
    implies
    \begin{equation*}
        \forall \gamma>0,\, \frac{2e^{-\frac{\gamma^2}{2}}}{\sqrt{\gamma^2+4}+\gamma}\leq\hspace*{-3.5pt} \sqrt{2\pi}(1-\Phi(\gamma)) \leq\hspace*{-2pt} \frac{2e^{-\frac{\gamma^2}{2}}}{\sqrt{\gamma^2+2}+\gamma} .
    \end{equation*}
    Since $0<\Phi<1$ it follows for $\gamma>0$ that:
    \begin{align*}
        \Phi(\gamma)(1-\Phi(\gamma)) &\geq \frac{2e^{-\frac{\gamma^2}{2}}}{\sqrt{\gamma^2+4}+\gamma}\left(1-\frac{2e^{-\frac{\gamma^2}{2}}}{\sqrt{\gamma^2+2}+\gamma}\right)
        \\   &
            \geq \frac{\sqrt{2/\pi}e^{-\frac{\gamma^2}{2}}}{\sqrt{\gamma^2+4}+\gamma}.
    \end{align*}
    Finally, since $\Phi(-\gamma)=1-\Phi(\gamma)$, $\gamma\mapsto\Phi(\gamma)(1-\Phi(\gamma))$ is an even function and we obtain for any $\gamma\in\RR$
    \begin{equation*}
        \Phi(\gamma)(1-\Phi(\gamma))\geq \frac{\sqrt{2/\pi}e^{-\frac{\gamma^2}{2}}}{\sqrt{\gamma^2+4}+|\gamma|}.
    \end{equation*}
\end{proof}

}

    \subsubsection{Proof of Proposition \ref{prop:likelihood}}

As a reminder, the likelihood is expressed as:
    \begin{align*}%
        &\ell_k(\mbf z|\mbf a,\theta)\\  &= \prod_{i=1}^k\Phi\left(\frac{\log a_i-\log\alpha}{\beta}\right)^{z_i}\left(1-\Phi\left(\frac{\log a_i-\log\alpha}{\beta}\right)\right)^{1-z_i} \\
            & = \exp\left[\sum_{i=1}^k\left(z_i\log\Phi(\gamma_i) + (1-z_i)\log(1-\Phi(\gamma_i))\right)\right] ,
    \end{align*}
denoting $\gamma_i=\beta^{-1}\log\frac{a_i}{\alpha}$. %

To treat the case where $\beta\to\infty$ we can observe that while $\alpha$ is fixed, the quantities $\Phi(\gamma_i)$ all converge to ${1}/{2}$. The product of those limits gives the limit $\ell_k(\mbf z|\mbf a,\theta)\conv{\beta\rightarrow\infty}2^{-k}$.

For the other cases, it should be reminded that $\Phi(x)=\frac{1}{2}(1+\erf(x/\sqrt{2}))$ and $\erf(x)\aseq{x\rightarrow\infty}1-\frac{e^{-x^2}}{x\sqrt{\pi}} + o\left(\frac{e^{-x^2}}{x}\right)$, which leads to 
    \begin{equation}\label{eq:Jasymp:phiasymp}
        \Phi(x)\aseq{x\rightarrow\infty}1 - \frac{e^{-\frac{x^2}{2}}}{x\sqrt{2\pi}} + o\left(\frac{e^{-\frac{x^2}{2}}}{x}\right).
    \end{equation}
Let us consider an $i\in\{1,\dots,k\}$ and compute
    \begin{align*}
       & z_i\log\Phi(\gamma_i) + (1-z_i)\log(1-\Phi(\gamma_i)) \\
          &  \aseq{\gamma_i\rightarrow\infty} z_i\log\left(1-\frac{e^{-\frac{\gamma_i^2}{2}}}{\gamma_i\sqrt{2\pi}} + o\left(\frac{e^{-\frac{\gamma_i^2}{2}}}{\gamma_i}\right)\right)\\
          &\hspace*{2em}+(1-z_i)\log\left(\frac{e^{-\frac{\gamma_i^2}{2}}}{\gamma_i\sqrt{2\pi}} + o\left(\frac{e^{-\frac{\gamma_i^2}{2}}}{\gamma_i}\right)\right)\\
           & \aseq{\gamma_i\rightarrow\infty} -z_i\frac{e^{-\frac{\gamma_i^2}{2}}}{\gamma_i\sqrt{2\pi}}  %
                +(1-z_i)\log\left(\frac{e^{-\frac{\gamma_i^2}{2}}}{\gamma_i\sqrt{2\pi}}\right) + o(1)\\
           & \aseq{\gamma_i\rightarrow\infty} - (1-z_i)\frac{\gamma_i^2}{2} - (1-z_i)\log(\gamma_i\sqrt{2\pi}) + o(1).
    \end{align*}
Using the relation $\Phi(-x)=1-\Phi(x)$, it follows that calculation to the following, in the case $a_i<\alpha$:
    \begin{align*}
       & z_i\log\Phi(\gamma_i) + (1-z_i)\log(1-\Phi(\gamma_i))\\
        &    \aseq{\gamma_i\rightarrow-\infty} - z_i\frac{\gamma_i^2}{2} - z_i\log(-\gamma_i\sqrt{2\pi}) + o(1).
    \end{align*}

{
Going back to the likelihood asymptotics, let us first $\alpha>0$ and suppose $\beta\to0$. Thus, denoting the vectors $\mbf N=(z_i\indic_{a_i<\alpha}+(1-z_i)\indic_{a_i>\alpha})_{i=1}^k$ and 
$\log^2\frac{\mbf a}{\alpha}=(\log^2\frac{a_i}{\alpha})_{i=1}^k$, we obtain
    \begin{align*}
        \ell_k(\mbf z|\mbf a,\theta) &\aseq{\beta\rightarrow0} \frac{C(\alpha)}{\sqrt{2\pi}^{|\mbf N|}}\beta^{|\mbf N|} e^{-\frac{\mbf N^T\log^2\frac{\mbf a}{\alpha}}{2\beta^2}+o(1)}\\ %
            &\aseq{\beta\rightarrow0}O\left(\beta^{|\mbf N|}e^{-\frac{\mbf N^T\log^2\frac{\mbf a}{\alpha}}{2\beta^2}}\right) ,
    \end{align*}
where $|\mbf N|=\sum_{i=1}^kN_i$ and $C(\alpha) =\prod_{i=1}^k\left|\log\frac{a_i}{\alpha}\right|^{N_i}$. %

Let us then fix $\beta>0$ to get
    \begin{align*}
        &\ell_k(\mbf z|\mbf a,\theta)\\ &\aseq{\alpha\rightarrow\infty} \frac{\beta^{|\mbf N|}}{\sqrt{2\pi}^{|\mbf N|}} \left(\prod_{i=1}^k\log\left(\frac\alpha{a_i}\right)^{-z_i}\right) \\
        &\hspace*{1.5em}\cdot \exp\left({-\frac{1}{2\beta^2}\sum_{i=1}^k z_i(\log a_i-\log\alpha)^2+o(1)}\right)
        \\
            &\aseq{\alpha\rightarrow\infty}O\left(\log(\alpha)^{-|\mbf z|} e^{-\frac{1}{2\beta^2}\sum_{i=1}^k z_i(\log a_i-\log\alpha)^2} \right),
    \end{align*}
where $|\mbf z|=\sum_{i=1}^kz_i$ is the number of failures in the observed sample.
Similarly, we obtain
    \begin{align*}
        &\ell_k(\mbf z|\mbf a,\theta) \\
        &\aseq{\alpha\rightarrow0}
            \frac{\beta^{|\mbf N|}}{\sqrt{2\pi}^{|\mbf N|}} \left(\prod_{i=1}^k\log\left(\frac{a_i}\alpha\right)^{-(1-z_i)}\right) \\
            &\hspace*{1.5em}\cdot \exp\left({-\frac{1}{2\beta^2}\sum_{i=1}^k (1-z_i)(\log a_i-\log\alpha)^2+o(1)}\right)
        \\
            &\aseq{\alpha\rightarrow0}O\left(|\log\alpha|^{|\mbf z|-k} e^{-\frac{1}{2\beta^2}\sum_{i=1}^k (1-z_i)(\log a_i-\log\alpha)^2} \right).
    \end{align*}
}

    \subsubsection{Proof of proposition \ref{prop:jeffb0}}

Let $\alpha>0$. For $0\leq k\leq2$, let us consider $A_{k1,k2}=A_{k1}+A_{k2}$ with $A_{kj}$ defined in (\ref{eq:Aij}):
$$
A_{k1,k2} =  \int_{0}^{\infty}\log^k\frac{a}{\alpha}\frac{\Phi'(\gamma(a))^2}{\Phi(\gamma(a))(1-\Phi(\gamma(a)))}p(a)da .
$$
We obtain
    \begin{align}
        \label{eq:AkAk}
        A_{k1,k2} &=   \beta^{k+1}\int_{-\infty}^\infty  F_{A_{k1,k2}}(\gamma)d\gamma ,\\
        F_{A_{k1,k2}}(\gamma) &=  \frac{\gamma^k}{2\sqrt{\pi^3\sigma^2}} \frac{e^{-\gamma^2}e^{-\frac{(\beta\gamma-\mu+\log\alpha)^2}{2\sigma^2}}}{\Phi(\gamma)(1-\Phi(\gamma))} .
    \end{align}
Using lemma \ref{lem:phi(1-phi)ineq1} an upper bound can be derived for $F_{A_{k1k2}}$: for any $\gamma\in\RR,\,\beta>0$,
\begin{equation}\label{eq:Jasymp:majFkk}
    |F_{A_{k1,k2}}(\gamma)|\leq \tilde C|\gamma|^ke^{-\frac{1}{3}\gamma^2}
\end{equation}
which defines an integrable function on $\RR$, $\tilde C$ being a constant independent of $\gamma$ and $\beta$. Hence the limit
\begin{equation*}
    \lim_{\beta\rightarrow0}\int_{-\infty}^\infty F_{A_{k1,k2}}(\gamma)d\gamma =\hspace*{-3pt} \int_{-\infty}^\infty \frac{\gamma^k}{2\sqrt{\pi^3\sigma^2}}\frac{e^{-\gamma^2}e^{-\frac{(\mu-\log\alpha)^2}{2\sigma^2}}}{\Phi(\gamma)(1-\Phi(\gamma))}d\gamma.
\end{equation*}
{%
The last integral is null when $k=1$ as the integrand is odd in this case. 
When $k$ is even, the integrand is positive valued almost everywhere, which implies that the integral is positive. 
From this, we can establish that  $A_{k1,k2}\equi{\beta\rightarrow0}D_k(\alpha)\beta^{k+1}$ for some $D_k(\alpha)>0$ if $k=0,2$, and $A_{11,12}\aseq{\beta\rightarrow0}o(\beta^2)$.}

Looking back at the Fisher information matrix, we can state that
    \begin{equation*}
        \det\cI(\theta)\aseq{\beta\rightarrow0}\frac{D_0(\alpha)D_2(\alpha)}{\alpha^4\beta^2} + o\left(\frac{1}{\beta^2}\right).
    \end{equation*}
Finally, we obtain:
\begin{equation}\label{eq:Jrate}
    J(\theta)\equi{\beta\rightarrow0} \frac{D'(\alpha)}{\beta},
\end{equation}
where $D'(\alpha)>0$ is a constant independent of $\beta$.

\subsubsection{Proof of proposition \ref{prop:jeffbinf}}

As a reminder, the asymptotic expansion of the $\erf$ function in $0$ is:
    \begin{equation*}
        \erf(\gamma)\aseq{\gamma\rightarrow0}\frac{2}{\sqrt{\pi}} \gamma+O(\gamma^2) ,
    \end{equation*}
which allows us to state the behavior of $\Phi(\gamma)$ when $\gamma\to0$: 
\begin{equation*}
    \Phi(\gamma)\aseq{\gamma\rightarrow0} \frac{1}{2} + \frac{1}{\sqrt{2\pi}} \gamma+O(\gamma^2).    
\end{equation*}
 
Let us now fix $\alpha>0$ and consider $A_{k1,k2}= A_{k1}+A_{k2}$:
\begin{align*}
   A_{k1,k2} &= \int_{-\infty}^\infty \tilde F_{A_{k1,k2}}(x)dx, \\
    \tilde F_{A_{k1,k2}}(x) &= \frac{x^k}{2\sqrt{\pi^3\sigma^2}}\frac{e^{-\frac{x^2}{\beta^2}}e^{-\frac{(x-\mu+\log\alpha)^2}{2\sigma^2}}}{\Phi(\beta^{-1}x)(1-\Phi(\beta^{-1}x))}.
\end{align*}
Let us note the convergence of $\tilde F_{A_{k1,k2}}(x)$ towards %
an integrable function when $\beta\to\infty$.
Moreover, lemma \ref{lem:phi(1-phi)ineq1} allows us to define bounds for $\tilde F_{A_{k1,k2}}$:
\begin{align*}
    |\tilde F_{A_{k1,k2}}(x)|&\leq \frac{2|x|^k}{\sqrt{\pi^3\sigma^2}}e^{-\frac{(x-\mu+\log\alpha)^2}{2\sigma^2}} e^{2\frac{x^2}{\pi\beta^2}} \\
        &\leq \frac{2|x|^k}{\sqrt{\pi^3\sigma^2}} e^{-\frac{(x^2-2(\mu-\log\alpha))^2}{4\sigma^2}} e^{\frac{(\mu+\log\alpha)^2}{2\sigma^2}} ,
\end{align*}
for any $x\in\RR$ and $\beta>2\sigma/\sqrt{\pi}$. This dominating function is integrable on $\RR$. Thus, when $\beta\to\infty$, $A_{k1,k2}$ admits a limit expressed by:
\begin{align*}
    \lim_{\beta\rightarrow\infty}A_{k1,k2} &= E_k(\alpha) \\ &= \int_{-\infty}^\infty\frac{2x^k}{\sqrt{\pi^3\sigma^2}}e^{-\frac{(x-\mu+\log\alpha)^2}{2\sigma^2}} dx= \frac{2\sqrt{2}}{\pi}\EE[X^k] ,
\end{align*}
with $X\sim\cN(\mu-\log\alpha,\sigma^2)$.
Recalling the expression of the Jeffreys prior:
    \begin{equation*}
        J(\theta)^2 = \left|\frac{1}{\alpha^2\beta^6}A_{01,02}A_{21,22} - \frac{1}{\alpha^2\beta^6}A_{11,12}^2\right| ,
    \end{equation*}
    we can deduce that it is equivalent to $(E_0(\alpha)E_2(\alpha)-E_1^2(\alpha))/\alpha^2\beta^6$ when $\beta\to\infty$.
    Finally,
    \begin{equation*}
        J(\theta)\equi{\beta\rightarrow\infty}\frac{E'}{\alpha\beta^3} ,
    \end{equation*}
with $E'=\sqrt{E_0(\alpha)E_2(\alpha)-E_1^2(\alpha)} = 2 \sigma /\pi$.

    \subsubsection{Proof of proposition \ref{prop:jeffalph}}
    
As a preliminary result, let us use Eq.~(\ref{eq:Jasymp:phiasymp}) to obtain
\begin{equation}\label{eq:Jasymp:Phi(1-Phi)equi}
    \Phi(\gamma)(1-\Phi(\gamma))\equi{|\gamma|\to\infty} \frac{e^{\frac{-\gamma^2}{2}}}{|\gamma|\sqrt{2\pi}}.
\end{equation}

Let us consider $A_{k1,k2}=A_{k1}+A_{k2}$:
\begin{align*}
    & A_{k1,k2} \\ &= C'\int_0^{\infty}\left(\log\frac{a}{\alpha}\right)^k\frac{e^{-\frac{1}{\beta^{2}}\log^2\frac{a}{\alpha}}e^{-\frac{(\log a-\mu)^2}{2\sigma^2}}}{\Phi(\beta^{-1}\log\frac{a}{\alpha})\left(1-\Phi(\beta^{-1}\log\frac{a}{\alpha})\right)}\frac{da}{a},
\end{align*}
denoting $C'=\sqrt{4\pi^3\sigma^2}^{-1}$. By substituting
\begin{align*}
    \nu &= \log a - \frac{\sigma^2}{\sigma^2+ \beta^2}\log\alpha - \frac{\beta^2}{\sigma^2+\beta^2}\mu \\  & = \log a -r\log\alpha - s\mu,
\end{align*}
we obtain
\begin{align*}%
    & A_{k1,k2} = C'\int_{-\infty}^\infty \hat F_{A_{k1,k1}}(\nu)d\nu , 
\end{align*}
and
\begin{align*}
    &\hat F_{A_{k1,k1}}(\nu)\\ &=   (\nu+(r-1)\log\alpha +s\mu)^k\frac{e^{-\frac{(\nu + (r+1)\log\alpha +s\mu)^2 }{\beta^2} } e^{-\frac{(\nu +r\log\alpha+ (s-1)\mu)^2}{2\sigma^2}}}{\left[\Phi(1-\Phi)\right](h_\beta(\nu))} ,
\end{align*}
where $h_\beta(\nu)=\beta^{-1}(\nu+(r-1)\log\alpha+s\mu)$. Using Eq.~(\ref{eq:Jasymp:Phi(1-Phi)equi}), we obtain
\begin{equation}\label{eq:Jasymp:phi1phinurs}
    \left[\Phi(1-\Phi)\right](h_\beta(\nu))\hspace*{-2pt} \equi{|\log\alpha|\rightarrow\infty} \hspace*{-1pt}\frac{\beta e^{-\frac{(\nu+(r-1)\log\alpha+s\mu)^2}{2\beta^2}}}{|\nu + (r-1)\log\alpha + s\mu|\sqrt{2\pi}}.
\end{equation}
Then for a clear understanding of the asymptotic behavior of $\hat F_{A_{k1,k2}}$, let us compute 
\begin{align}
\nonumber
     -&\frac{(\nu + (r-1)\log\alpha +s\mu)^2 }{2\beta^2} - \frac{(\nu +r\log\alpha+ (s-1)\mu)^2}{2\sigma^2} \\
\nonumber
=& -\left(\frac{1}{2\beta^2}+\frac{1}{2\sigma^2}\right)\nu^2 \\
&-\left(\frac{((r-1)\log\alpha+s\mu)^2}{2\beta^2} +\frac{(r\log\alpha+(s-1)\mu)^2}{2\sigma^2}\right)\nonumber\\
         =& -\left(\frac{1}{2\beta^2}+\frac{1}{2\sigma^2}\right)\nu^2 - \frac{(\log\alpha - \mu )^2}{2(\beta^2+\sigma^2)}.
\label{eq:Jasymp:nurs}
\end{align}
Expanding $\hat F_{k1,k2}(\nu) = \sum_{j=1}^kC_k^j(r-1)^j\log^j\alpha(\nu+s\mu)^{k-j}g(\nu)$ $ = \sum_{j=0}^k\hat F_{k1,k2}^{(j)}(\nu)$, with $g(\nu)$ defined as    \begin{equation*}
        g(\nu) = \frac{e^{-\frac{(\nu + (r+1)\log\alpha +s\mu)^2 }{\beta^2} } e^{-\frac{(\nu +r\log\alpha+ (s-1)\mu)^2}{2\sigma^2}}}{\left[\Phi(1-\Phi)\right](\beta^{-1}(\nu + (r-1)\log\alpha + s\mu))}.
    \end{equation*}
By combining Eqs.~(\ref{eq:Jasymp:phi1phinurs}) and (\ref{eq:Jasymp:nurs}), we obtain that $\hat F_{k1,k2}^{(j)}$ satisfies
    \begin{align*}
         &\hat F_{k1,k2}^{(j)}(\nu)e^{\frac{(\log\alpha-\mu)^2}{2\beta^2+2\sigma^2}}(\log\alpha)^{-j}|\log\alpha|^{-1} \\
&            \conv{|\log\alpha|\rightarrow\infty} (\sqrt{2\pi}\beta)^{-1}C_k^j(r-1)^{j+1}(\nu+s\mu)^{k-j} e^{-\left(\frac{1}{2\beta^2}+\frac{1}{2\sigma^2}\right)\nu^2}.
    \end{align*}
Using lemma \ref{lem:phi(1-phi)ineq2}, we can also define an upper bound for the above function in the form of an integrable function expressed as
\begin{align*}
    |\hat F_{k1,k2}^{(j)}(\nu)|e^{\frac{(\log\alpha-\mu)^2}{2\beta^2+2\sigma^2}}|\log\alpha|^{-j-1} &\leq \frac{\sqrt{2/\pi}e^{-\left(\frac{1}{2\beta^2}+\frac{1}{2\sigma^2}\right)\nu^2}}{ |h_\beta(\nu)|+\sqrt{h_\beta(\nu)^2+4}}\\
       & \leq\frac{1}{\sqrt{2\pi}}e^{-\left(\frac{1}{2\beta^2}+\frac{1}{2\sigma^2}\right)\nu^2}.
\end{align*}
Therefore, we can switch the limits and the integration, and the following results are obtained:
\begin{align*}
    C''\beta A_{01,02}e^{\frac{(\log\alpha-\mu)^2}{2\beta^2+2\sigma^2}} \aseq{\log\alpha\rightarrow\infty}& (1-r)G\log\alpha - s\mu G  +o(1),
\end{align*}
\begin{align*}
    C''\beta A_{11,12}e^{\frac{(\log\alpha-\mu)^2}{2\beta^2+2\sigma^2}}
        &\aseq{\log\alpha\rightarrow\infty} -(1-r)^2G\log^2\alpha - s^2\mu^2 G \\
        &+ 2(1-r)s\mu G\log\alpha - G' + o(1),
\end{align*}
\begin{align*}
&C''\beta A_{22,22}e^{\frac{(\log\alpha-\mu)^2}{2\beta^2+2\sigma^2}}\\
&\aseq{\log\alpha\rightarrow\infty}\hspace*{-4pt}
(1-r)^3G\log^3\alpha - s^3\mu^3 G 
-3(1-r)^2s\mu G\log^2\alpha \\
&\hspace*{2em}-3(1-r)s^2\mu^2 G\log\alpha+3(1-r)G'\log\alpha  + o(1),
\end{align*}
with $C''=(C'\sqrt{2\pi})^{-1}$, $G=\sigma\beta\sqrt{2\pi(\beta^2+\sigma^2)^{-1}}$ and $G'=G^2/2\pi$. 
This way
\begin{align*}
    &C''\alpha\beta^{8} J(\theta)^2e^{\frac{(\log\alpha-\mu)^2}{2\beta^2+2\sigma^2}}\\
      &  \aseq{\log\alpha\rightarrow\infty}\hspace*{-4pt}  
        3(r-1)^2s^2\mu^2G^2\log^2\alpha 
        + 3(r-1)^2s^2\mu^2G^2\log^2\alpha \\
        &\hspace*{2em} +3 (r-1)^2GG'\log^2\alpha 
        - 4(r-1)^2s^2\mu^2G^2\log^2\alpha \\
        &\hspace*{2em} - 2(r-1)^2s^2\mu^2G^2\log^2\alpha - 2(r-1)^2GG'\log^2\alpha \\
        &\hspace*{2em}+ o(\log^2\alpha).
\end{align*}
Note that the above equality is still valid when $\log\alpha\to-\infty$. Finally
\begin{equation*}
    J(\theta) \equi{|\log\alpha|\rightarrow\infty} G''(\beta)\frac{|\log\alpha|}{\alpha}\exp\left(-\frac{(\log\alpha-\mu)^2}{2\beta^2+2\sigma^2}\right)  ,
\end{equation*}
with 
\begin{align*}
    G''(\beta) &=C''^{-1}(r-1)^2GG'\beta^{-4} 
        = \frac{2\sigma^3\beta^3 }{\sqrt{\pi}(\sigma^2+\beta^2)^{7/2}}. %
\end{align*}

\section{{A review of the properties of the SK posterior}}\label{app:SKreview}

In this paper, we have compared our approach with the one that results from an adaptation of the prior suggested by \citet{Straub2008}. We proved in \ref{app:asymptotics} that within our framework, this prior results in an improper posterior. This puts the validity of the MCMC estimations into question, and could explain the lower performance of the SK prior compared to the Jeffreys prior. 
In \cite{Straub2008}, the authors use the Bayesian methodology the same way we do, yet the consideration of uncertainties over the observed earthquake intensity measures and the equipment capacities leads to a slightly different likelihood.
In order to verify that the drawbacks of their prior highlighted in this paper are not due to our statistical choices, we dedicated this appendix to the study of the asymptotic expansions of the posterior in the exact framework presented in~\cite{Straub2008}.
We shall first introduce the exact model of \citeauthor{Straub2008} for the estimation of seismic fragility curves in \ref{subapp:C1}, using notations consistent with our study. We will then derive the likelihood and its asymptotics in \ref{subapp:C2}. Finally, we will express the convergence rates of the posterior in \ref{subapp:C3}, which will allow us to conclude that the SK posterior is indeed improper.

\subsection{Statistical model and likelihood}
\label{subapp:C1}
Let us consider the observations of earthquakes labeled $l=1,\dots,L$ at equipment labeled $i=1,\dots,I_j$ located in substations labeled $j=1,\dots,J$.
The observed items are $(\mbf z_{jl},\hat a_{jl})_{j,l}$, $\mbf z_{jl}=(z_{ijl})_i$ being the failure occurrences of the $I_j$ pieces of equipment at substation $j$ during earthquake $l$ ($z_{ijl}\in\{0,1\}$), and $\hat a_{jl}$ being the observed IM at substation $j$ during earthquake $l$ ($\hat a_{jl}\in(0,+\infty)$).
They are assumed to follow the latent model presented below.

At substation $j$ the $l$-th earthquake results in an IM value $a_{jl}$ that is observed with an uncertainty multiplicative noise: $\log\hat a_{jl}=\log a_{jl}+\eps_{jl}$ where $\eps_{jl}\sim\cN(0,\sigma_\eps^2)$. The noise variance $\sigma_\eps^2$ is supposed to be known. %
The uncertain intrinsic capacity of equipment $i$ at substation $j$ is $r_{ij}\sim\cN(\mu_r,\sigma^2_r)$ and $y_{jl}\sim\cN(0,\sigma^2_y)$ is the uncertain factor common to all equipment capacities at substation $j$ during earthquake $l$.
The random variables $r_{ij}$, $y_{jl}$ and $\eps_{jl}$ are supposed to be independent.

A failure of equipment $i$ at substation $j$ during earthquake $l$ is considered when the performance of the structural component $g_{ijl}$ satisfies $g_{ijl}>0$.%
This performance can be expressed as 
    \begin{equation*}
      g_{ijl} = \log \hat a_{jl}+\eps_{jl}-y_{jl}-r_{ij}=x_{jl}-r_{ij}  
    \end{equation*}
with $x_{jl} = \log \hat a_{jl}+\eps_{jl}-y_{jl}$.

This establishes the following conditional relation between the observed data:
    \begin{equation}
        p(z_{ijl}|\hat a_{jl},\Sigma) =\hspace*{-3pt} \int_\RR p(z_{ijl}|x_{jl},\hat a_{jl},\Sigma)\frac{\exp\left({-\frac{(x_{jl}-\log\hat{a}_{jl})^2}{2(\sigma^2_\eps+\sigma^2_y)}}\right)}{\sqrt{2\pi(\sigma_\eps^2+\sigma_y^2)}}dx_{jl} ,
    \end{equation}
denoting $\Sigma=(\sigma_r,\,\sigma_y,\,\mu_r)$, and with 
    \begin{equation}\label{eq:SKr:condzx}
        p(z_{ijl}|x_{jl},\hat a_{jl},\Sigma) = \Phi\left(\frac{x_{jl}-\mu_r}{\sigma_r}\right)^{z_{ijl}}\left(1-\Phi\left(\frac{x_{jl}-\mu_r}{\sigma_r}\right)\right)^{1-z_{ijl}} \hspace*{-4pt} ,
    \end{equation}
when substation $j$ is only affected by one earthquake. The method proposed in \cite{Straub2008} actually considers cases in which a substation may be impacted by two successive earthquakes and takes into account the fact that its response to the second would be correlated to its response to the first one. This would lead to a different likelihood. However, it is mentioned that this possibility only concerns a small number of data points. We can therefore limit our calculations to the simplest case and assume $l=L=1$. The subscript $l$ will therefore be dropped in what follows.

Finally, the likelihood for this model can be expressed as:
    \begin{align}
    \label{eq:SKr:likelihood}
        &\ell_J(\mbf z | \mbf{\hat a}, \Sigma)\\
            &= \prod_{j=1}^J \int_\RR\prod_{i=1}^{I_j} p(z_{ij}|x_{j},\log\hat a_{j},\Sigma) \frac{\exp\left({-\frac{(x_{j}-\log\hat{a}_{j})^2}{2(\sigma^2_\eps+\sigma^2_y)}}\right)}{\sqrt{2\pi(\sigma_\eps^2+\sigma_y^2)}}dx_{j}  ,\nonumber
    \end{align}
denoting $\mbf z=(\mbf z_j)_{j=1}^J$, $\mbf{\hat a}=(\hat a_j)_{j=1}^J$, and with the integrated conditional distribution defined in Eq.~(\ref{eq:SKr:condzx}).

    In the Bayesian framework introduced in \cite{Straub2008}, the model parameter is $\Sigma$. Let us denote 
    $\alpha=\exp\mu_r$, $\beta=\sqrt{\sigma_r^2+\sigma^2_y}$ and $\rho=\sigma^2_y/\beta^2$. Denoting $\theta=(\alpha,\beta,\rho)$, the knowledge of $\Sigma$ then becomes equivalent to the one of $\theta$ and the likelihood of Eq.~(\ref{eq:SKr:likelihood}) can be expressed conditionally to $\theta$ instead of $\Sigma$:

    \begin{align}
    \label{eq:SKr:likelihood-theta}
        &\ell_J(\mbf z | \mbf{\hat a}, \theta)\\
            &= \prod_{j=1}^J  \int_\RR\prod_{i=1}^{I_j} %
            \Psi^{z_{ij}}\left(\frac{x-\log\alpha}{\beta\sqrt{1-\rho}}\right)
                \frac{\exp\left({-\frac{(x-\log\hat{a}_{j})^2}{2(\sigma^2_\eps+\rho\beta^2)}}\right)}{\sqrt{2\pi(\sigma_\eps^2+\rho\beta^2)}}dx ,\nonumber
    \end{align}
    where the notation $\Psi^{z_{ij}}(\gamma)$ is used to denote $\Phi(\gamma)^{z_{ij}}(1-\Phi(\gamma))^{1-z_{ij}}$.

    \citeauthor{Straub2008} propose the following improper prior distribution for the parameter $\theta$:
    \begin{equation}\label{eq:SKr:SKprior}
        \pi_{SK}(\theta) \propto \frac{1}{\beta\alpha}\exp\left(-\frac{(\log\alpha-\mu)^2}{2\sigma^2}\right)\indic_{0\leq\rho\leq1}.
    \end{equation}
    \emph{A posteriori} estimations of $\theta$ are consequently generated from MCMC methods %
    \begin{equation}\label{eq:SKr:post}
        p(\theta|\mbf z,\mbf{\hat a})\propto \ell_J(\mbf z | \mbf{\hat a}, \theta)\pi_{SK}(\theta).
    \end{equation}

\subsection{Likelihood asymptotics}
\label{subapp:C2}
    In this appendix, we will study the asymptotics of the likelihood defined in (\ref{eq:SKr:likelihood-theta}) when $\beta\to\infty$.
    Let us first consider the substitution $u=(x-\log\hat a_j)/\sqrt{\sigma_\eps^2+\rho\beta^2}$ to express the likelihood as
    \begin{align*}
        \ell_J(\mbf z | \mbf{\hat a}, \theta) &= \prod_{j=1}^J\int_\RR f_{j}^\beta(u)du,\\
        f_{j}^\beta(u) &= \prod_{i=1}^{I_j}\Phi(h_j^\beta(u))^{z_{ij}}(1-\Phi(h_j^\beta(u)))^{1-z_{ij}} \frac{e^{-\frac{u^2}{2}}}{\sqrt{2\pi}} ,
    \end{align*}
    with 
        $$h_j^\beta(u) = \frac{(u+\log\hat a_j)\sqrt{\sigma_\eps^2+\rho\beta^2}-\log\alpha}{\beta\sqrt{1-\rho}}.
        $$
    This way, remembering that $0\leq\Phi(1-\Phi)\leq1$, an upper bound $u\mapsto e^{-u^2/2}/\sqrt{2\pi}$ can be found for $f_{j}^\beta$ for any $\beta,u$. It converges when $\beta \to +\infty$ as follows:
        \begin{equation*}
            \lim_{\beta\rightarrow\infty} f_{j}^\beta(u)%
                = \prod_{i=1}^{I_J}%
                \Psi^{z_{ij}}\left(\frac{(u+\log\hat a_j)\sqrt{\rho}}{\sqrt{1-\rho}}\right)
                \frac{e^{-\frac{u^2}{2}}}{\sqrt{2\pi}}.
        \end{equation*}
    This gives the following limit for the likelihood:
        \begin{align}\label{eq:SKr:limitlik}
            &\lim_{\beta\rightarrow\infty}\ell_J(\mbf z | \mbf{\hat a}, \theta)\\
                &= \prod_{j=1}^J\int_\RR \prod_{i=1}^{I_j}%
                \Psi^{z_{ij}}\left(\frac{(u+\log\hat a_j)\sqrt{\rho}}{\sqrt{1-\rho}}\right)
                \frac{e^{-\frac{u^2}{2}}}{\sqrt{2\pi}}du ,\nonumber
        \end{align}
    which is a positive quantity.

    \subsection{Posterior asymptotics}
\label{subapp:C3}
    By combining Eqs.~(\ref{eq:SKr:limitlik}), (\ref{eq:SKr:post}) and (\ref{eq:SKr:SKprior}) we obtain the posterior asymptotics
    \begin{equation}
            p(\theta|\mbf z, \mbf{\hat a}) \equi{\beta\rightarrow{\infty}} \frac{C}{\beta},
        \end{equation}
 with $C$ being a positive constant.
    This makes the posterior improper w.r.t. $\beta$, with the same convergence rate as the one derived in our framework.

\section{Supplementary material}

Supplementary material related to this article can be found online
at \href{https://ars.els-cdn.com/content/image/1-s2.0-S0266892024000444-mmc1.pdf}{https://ars.els-cdn.com/content/image/1-s2.0-S0266892024000444-mmc1.pdf}.

\bibliographystyle{elsarticle-num-names} 
\bibliography{biblio}

\end{document}